\documentclass[journal]{IEEEtran}
\usepackage{graphicx}
\usepackage{amssymb,amsmath,amsthm}
\usepackage{mathtools}
\usepackage[ruled,vlined]{algorithm2e}
\usepackage{xcolor}
\usepackage{url}
\usepackage{enumitem}

\usepackage{pgfplots}
\pgfplotsset{compat=1.16}
\usepgfplotslibrary{statistics}

\usepackage{pgfplotstable}
\usepackage{booktabs}
\usepackage{array}
\usepackage{colortbl}
\usepackage{subcaption}
\usepackage{tcolorbox}

\usepackage[style=ieee]{biblatex}
\addbibresource{references.bib}

\newcommand{\papertitle}{Coupled regularized sample covariance matrix
estimator for multiple classes}

\newcommand{\keywords}{Covariance matrix estimation, regularization,
shrinkage, elliptical distribution, regularized discriminant analysis.}

\scriptsize

\usepackage[pdftex,
            pdfauthor={Elias Raninen and Esa Ollila},
            pdftitle={\papertitle},
            pdfkeywords={\keywords},
            pdfcreator={pdflatex}]{hyperref}

\newcommand\numberthis{\addtocounter{equation}{1}\tag{\theequation}}

\theoremstyle{remark}
\newcounter{ctheorem}
\newtheorem{theorem}[ctheorem]{Theorem}

\newcounter{clemma}
\newtheorem{lemma}[clemma]{Lemma}

\newcounter{cproposition}
\newtheorem{proposition}[cproposition]{Proposition}

\newcommand{\mat}{\mathbf}

\renewcommand{\vec}{\mathbf}

\DeclareMathOperator{\E}{\mathbb{E}}

\DeclareMathOperator{\tr}{tr}

\newcommand{\A}{\mat{A}}
\newcommand{\B}{\mat{B}}

\newcommand{\I}{\mat{I}}

\renewcommand{\S}{\mat{S}}
\newcommand{\T}{\mat{T}}

\newcommand{\x}{\vec{x}}

\renewcommand{\a}{\vec{a}}

\newcommand{\bmu}{\boldsymbol{\mu}}
\newcommand{\bo}{\boldsymbol{0}}

\newcommand{\covm}{\boldsymbol{\Sigma}}
\newcommand{\M}{\boldsymbol{\Sigma}}

\newcommand{\real}{\mathbb{R}}

\newcommand{\Fro}{\mathrm{F}}
\newcommand\norm[1]{\left\lVert{#1}\right\rVert}

\newcommand{\argmin}{\operatornamewithlimits{arg~min\ }}

\usepackage{mathtools}
\DeclarePairedDelimiterX{\ip}[2]{\langle}{\rangle_\Fro}{#1, #2}
\newcommand\fn[1]{\left\lVert{#1}\right\rVert_\Fro}

\newcommand{\MSE}{\mathrm{MSE}}
\newcommand{\NMSE}{\mathrm{NMSE}}

\begin{document}
\title{\papertitle}
\author{Elias~Raninen,~\IEEEmembership{Student Member,~IEEE,}
	Esa~Ollila,~\IEEEmembership{Senior Member,~IEEE}%
	\thanks{E. Raninen and E. Ollila are with the Department of Signal
	Processing and Acoustics, Aalto University, P.O. Box 15400, FI-00076
    Aalto, Finland. The work was supported in part by the Academy of
    Finland Grant 298118. 
    }
	}

\maketitle
\begin{abstract}
	The estimation of covariance matrices of multiple classes with limited
	training data is a difficult problem. The sample covariance matrix
	(SCM) is known to perform poorly when the number of variables is large
	compared to the available number of samples. In order to reduce the
	mean squared error (MSE) of the SCM, regularized (shrinkage) SCM
	estimators are often used. In this work, we consider regularized SCM
	(RSCM) estimators for multiclass problems that couple together two
	different target matrices for regularization: the pooled (average) SCM
	of the classes and the scaled identity matrix. Regularization toward
	the pooled SCM is beneficial when the population covariances are
	similar, whereas regularization toward the identity matrix guarantees
	that the estimators are positive definite. We derive the MSE optimal
	tuning parameters for the estimators as well as propose a method for
	their estimation under the assumption that the class populations
	follow (unspecified) elliptical distributions with finite fourth-order
	moments. The MSE performance of the proposed coupled RSCMs are
	evaluated with simulations and in a regularized discriminant analysis
	(RDA) classification set-up on real data. The results based on three
	different real data sets indicate comparable performance to
	cross-validation but with a significant speed-up in computation time. 
\end{abstract}

\begin{IEEEkeywords}
	\keywords
\end{IEEEkeywords}

\IEEEpeerreviewmaketitle

\section{Introduction}\label{sec:introduction}
\IEEEPARstart{A}{n} increasingly common scenario in modern supervised learning
problems is that the dimension $p$ of the data is large compared to the number
of available training samples $n$ or exceed it multifold ($p \gg n$). Such
scenarios are commonly referred to as high-dimensional or insufficient sample
support problems. In this paper, we address the problem of high-dimensional
covariance matrix estimation in a multiclass setup, where there are $K$
different classes or populations, each comprising $n_k$, $k=1,\ldots,K$,
independent and identically distributed (i.i.d.) $p$-dimensional samples.
Estimates of the covariance matrices are needed in many multivariate analysis
problems, such as in principal component analysis and canonical correlation
analysis~\cite{bilodeau1999theory}, discriminant analysis~\cite{Friedman1989},
Gaussian mixture models (GMM)~\cite{Halbe2013}, as well as in many engineering
applications, e.g., in array signal processing~\cite{du_fully_2010},
genomics~\cite{schafer_shrinkage_2005}, portfolio optimization in
finance~\cite{ledoit_improved_2003}, and graphical models~\cite{zhang2013}. The
success of the analysis is often directly lined with the accuracy of the
estimated covariance matrix.

The covariance matrix of class $k \in \{1,\ldots,K\}$ is defined as
\begin{equation*}
	\M_k 
	= 
	\mathbb{E}[(\x_{ik}-\bmu_k)( \x_{ik}-\bmu_k)^\top], 
\end{equation*}
where $\x_{ik}$ denotes the $i$th sample from class $k$ and $\bmu_k =
\mathbb{E}[\x_{ik}]$ is the mean of class $k$. The conventional estimate for
the covariance matrix is the unbiased sample covariance matrix (SCM) defined
for class $k$ by
\begin{equation*}
	\S_k =
	\frac{1}{n_k-1}\sum_{i=1}^{n_k}{ (\x_{ik}-\overline\x_k)
	(\x_{ik}-\overline\x_k)^\top}, 
\end{equation*}
where $\overline \x_k = (1/n_k)\sum_i \x_{ik}$ is the sample mean of class
$k$.  In high-dimensional settings, the SCM is known to work poorly due to its
high variability. Furthermore, if $n_k < p$, then the SCM is singular, and
hence, its inverse cannot be computed. 

Better estimators can be developed by using regularization, where the key idea
is to shift or shrink the estimator toward a predetermined target or model.
This can significantly decrease the variance of the estimator and improve the
overall performance by reducing its mean squared error (MSE). This phenomenon
can be understood via the following well-known bias-variance decomposition of
the MSE. For an estimator $\hat \M_k$ of $\M_k$, the MSE can be written as
\begin{align*}
    \MSE(\hat\M_k)
    &= \E\left[\norm{\M_k - \hat \M_k}^2_\Fro\right]
    \label{eq:L}
    \\
    &=
    \E\left[\norm{\hat\M_k - \E[\hat\M_k]}_\Fro^2\right]
    + \norm{\M_k - \E[\hat \M_k]}^2_\Fro,
\end{align*}
where the first term on the right-hand side is the variance and the second
term is the squared bias of the estimator. Since the SCM is unbiased, its MSE
is equal to its variance. By using a regularized SCM (RSCM), however, it is
possible to reduce the MSE significantly at the cost of introducing some bias.

In general, regularization combines an unstructured estimate with a predefined
model or target. The target in regularization can be decided based on a prior
knowledge, assumptions, or on a property, which we want to enforce.
Regularization can be accomplished different ways. For instance, the SCM can
be combined linearly with a target matrix as, e.g.,
in~\cite{ledoit_improved_2003},~\cite{ledoit_well_conditioned_2004},~\cite{ledoit_honey},~\cite{schafer_shrinkage_2005},~\cite{chen_shrinkage_2010},~\cite{Li2017},~\cite{Maio2019},
and~\cite{Ollila2019}. The approaches for parameter tuning differs between the
methods. A popular approach taken in~\cite{ledoit_well_conditioned_2004} is
based on estimating the asymptotically optimal (in terms of minimizing the
MSE) tuning parameters. Under additional distributional assumptions, the
method of~\cite{ledoit_well_conditioned_2004} has later been improved for
Gaussian samples in~\cite{chen_shrinkage_2010} and elliptically distributed
samples in~\cite{Ollila2019}. Other approaches for tuning parameter selection
are for example the expected likelihood
approach~\cite{Besson2013,Abramovich2015}. Alternatively, the eigenvalues of
the SCM can be transformed non-linearly toward a specific structure as
in~\cite{ledoit_spectrum_2015} and~\cite{ledoit_analytical_2020}. Multiple
targets can also be used. For instance, in~\cite{Ikeda2016}, a double
shrinkage covariance matrix estimator was considered, which shrinks
simultaneously toward a spherical matrix and a diagonal matrix. A somewhat
related RSCM formulation had been proposed in~\cite{Halbe2013} in the setting
of Gaussian mixture models. In~\cite{Lancewicki2014}, the estimator
of~\cite{ledoit_well_conditioned_2004} was extended to multiple simultaneous
target matrices, which satisfy a certain target structure.
In~\cite{Tong2018}, a linear multi-target shrinkage covariance matrix
estimator was proposed, which optimizes the tuning parameters using
low-complexity leave-one-out cross-validation. There are also methods tailored
for multiclass problems. For instance,~\cite{Besson2020} considered covariance
matrix estimation from two possibly mismatched data sets using the maximum
likelihood principle. In our previous work~\cite{raninen2020linear}, linear
pooling of SCMs was considered and was applied to portfolio optimization. In
the context of discriminant analysis classification, a popular RSCM was
proposed in~\cite{Friedman1989}. Inspired by~\cite{Friedman1989}, this paper
focuses on two specific target matrices: the pooled SCM and the spherical
matrix.

The rest of the paper is organized as follows. In
Section~\ref{sec:motivation}, we give some background and motivation for the
proposed estimator. In Section~\ref{sec:derivationalphabeta}, we analyze some
properties of the proposed estimator as well as discuss the optimization of
the tuning parameters. In Section~\ref{sec:estimationofparameters}, we show
how the theoretical tuning parameters can be estimated in practice when the
unknown class populations follow unspecified elliptically symmetric
distributions. Section~\ref{sec:practicalimplementation} discusses some
practical considerations and how to use the method in choosing the tuning
parameters in RDA. In Section~\ref{sec:simulations}, synthetic simulation
studies are conducted in order to assess the MSE performance of the estimator.
The method is then compared to cross-validation in choosing the tuning
parameters for RDA classification using three different real data sets.
Lastly, Section~\ref{sec:conclusion} concludes.

\emph{Notation}:
Throughout the paper, unless stated otherwise, all norms are Frobenius norms
defined by $\|\mat A\|_\Fro^2 = \ip{\A}{\A} = \tr(\mat{A}^\top \mat{A})$,
where the inner product of two matrices (of appropriate dimensions) $\A$ and
$\B$ is defined by $\ip{\A}{\B} = \tr(\mat A^\top \mat B)$. For any square
matrix $\A$, we frequently use the notation $\I_{\A} = (1/p) \tr(\A) \I$ and
$\A^\I = \A - \I_{\A}$. For a vector $\a \in \real^p$, the Euclidean norm is
defined as $\|\a\| = \sqrt{\a^\top \a}$. We define $\real_{\geq 0} = \{a \in
\real : a \geq 0\}$. For a scalar variable $a$, we use the following shorthand
notation $\partial_a = \partial/\partial a$ for the partial derivative with
respect to $a$. For scalars $a < b$ and $c$, we define the \emph{clamp} or
\emph{clip} function $[c]_a^b = \max\{a,\min\{b,c\}\}$, which projects $c$
onto to the interval $[a,b]$. Lastly, $\text{Unif}\{a,b\}$ and
$\text{Unif}(a,b)$ denote the discrete and continuous univariate uniform
distributions on the set $\{a,a+1,\ldots,b-1,b\}$ and on the open interval
$(a,b)$, respectively.

\section{Background and motivation}\label{sec:motivation}
In multiclass problems, when the classes can be assumed to have a similar
covariance structure, it is beneficial to shrink the individual class
covariance matrix estimates toward the pooled (average) SCM of the classes,
\begin{equation}
    \S = \sum_{k=1}^K \pi_k \S_k,
    \label{eq:pooledS}
\end{equation}
where
\begin{equation*}
    \pi_k = \frac{n_k}{n_1+n_2+\cdots+n_K}.
\end{equation*}
For example, the methods proposed in~\cite{Friedman1989},~\cite{Greene1989},
and~\cite{Rayens1991} used the convex combination 
\begin{equation}
    \hat \M_k(\beta)
    = \beta \S_k + (1-\beta)\S,
    \label{eq:partiallypooledSCM}
\end{equation}
where $\beta \in [0,1]$, as an estimate for the class covariance matrix. 
The methods in~~\cite{Friedman1989}, \cite{Greene1989}, and~\cite{Rayens1991}
were designed for discriminant analysis (DA) classification in which the
problem is to classify a new sample $\x$ to one of the $K$ classes using the
\emph{discriminant rule}
\begin{equation}
    \hat k = \argmin_{k \in \{1,\ldots,K\}}
    {(\x-\hat\bmu_k)}^\top {\hat\M_k}^{-1} (\x-\hat\bmu_k)
    + \log|\hat\M_k|,
    \label{eq:DArule}
\end{equation}
where $\hat\bmu_k$ and $\hat\M_k$ denote estimates of the mean and the
covariance matrix of class $k$, respectively. Different DA methods differ in
the approach used to estimate the class means and class covariance matrices.
For example, \emph{quadratic discriminant analysis} (QDA) uses sample means
and SCMs in~\eqref{eq:DArule}. In \emph{linear discriminant analysis} (LDA)
the pooled SCM $\S$ in~\eqref{eq:pooledS} is used for all classes. If the true
covariance matrices are equal, $\M_1=\M_2=\ldots=\M_K \equiv \M$, then LDA is
well justified since the pooled SCM is an unbiased estimator of $\M$, i.e.,
$\E[\S] = \sum_{j} \pi_j \E[\S_j] = (\sum_j \pi_j)(\M) = \M$. However, even in
the case that the true population covariance matrices differ substantially,
LDA often outperforms QDA when the sample size is small compared to the
dimension~\cite{Greene1989},~\cite{Hastie2009}. This is due to the high
variance of the class SCMs compared to the pooled SCM. The partially pooled
estimator~\eqref{eq:partiallypooledSCM} includes QDA and LDA as special cases
when $\beta=1$ and $\beta=0$, respectively.

In a very high-dimensional case, when $p > \sum_j n_j$, the partially pooled
estimator~\eqref{eq:partiallypooledSCM} will no longer be positive definite.
In \emph{regularized discriminant analysis}~\cite{Friedman1989} (RDA), this
problem is solved by coupling the pooled estimator with a spherical target
matrix, that is, by regularization toward a scaled identity matrix via
\begin{align*}
    \hat\M_k(\alpha,\beta) = 
    \alpha \hat\M_k(\beta)
    +
    (1-\alpha)
    \I_{\hat \M_k(\beta)},
    \label{eq:RSCMRDA}
    \numberthis
\end{align*}
where $\hat \M_k(\beta)$ is given in~\eqref{eq:partiallypooledSCM} and
$\I_{\hat \M_k(\beta)} = (1/p)\tr(\hat \M_k(\beta))\I$.

The success of regularization depends on proper selection of the tuning
parameters. In classification problems, cross-validating the classification
error is the standard method for choosing the tuning parameters.
Cross-validation can, however, be computationally very costly for large data
sets. In certain high-dimensional binary classification settings, it is
possible to leverage on results from random matrix theory in order to estimate
the (asymptotically) optimal tuning parameters, which minimize the
classification error~\cite{yangRegularizedDiscriminantAnalysis2018,
elkhalilLargeDimensionalStudy2020}.

The main contribution of this work is to develop a computationally efficient
method for choosing the tuning parameters in a multiclass covariance matrix
estimation setting, where there can be more than two classes, and where the
RSCMs are defined in~\eqref{eq:RSCMRDA}. Specifically, we find estimates of the
tuning parameters that minimize the class-specific MSE:
\begin{equation}
    (\alpha_k^\star,\beta_k^\star)
    =
    \argmin_{\alpha,\beta \in [0,1]}
    \E\left[\norm{\hat \M_k (\alpha,\beta) - \M_k}^2_\Fro\right]
    \label{eq:alphabetaoptimal}
\end{equation} 
for each population $k = 1,\ldots,K$. The expressions for the optimal tuning
parameters will depend on unknown population parameters since the MSE
expression involves the unknown covariance matrix. However, we show that the
MSE can be estimated fairly easily by assuming that the class populations
follow unspecified elliptically symmetric distributions.

It is worth noting that, in RDA~\cite{Friedman1989} the tuning parameters are
common across the classes, whereas in this paper we use class-specific tuning
parameters. However, in cases when it is useful to use common tuning
parameters, they can easily be acquired by averaging as explained in
Subsection~\ref{sec:averaging}. Hence, our proposed method can be used to
obtain tuning parameters for the original RDA~\cite{Friedman1989} framework,
for which there already exist widely established toolboxes in programming
languages such as R (see
e.g.,~\cite{klaRpackage,roeverKlaRClassificationVisualization2020}). Lastly, an important
distinction to~\cite{Friedman1989} is that our method is not only targeted for
classification problems but is suitable for other applications as well. For
example,~\cite{Besson2020} considered the problem of estimating a covariance
matrix from two data sets, where the population covariance matrix of the first
data set is different but close to the population covariance matrix of the
second data set. This type of problems are encountered in radar processing as
well as in hyperspectral imaging applications, where the additional data sets
may have been acquired with slightly different measurement configurations, and
hence, have slightly different population parameters.

The current paper extends our earlier preliminary work in~\cite{Raninen2018},
which considered the estimation of the MSE optimal tuning parameters for the
partially pooled estimator in~\eqref{eq:partiallypooledSCM}. Here, we consider
the more general estimator in~\eqref{eq:RSCMRDA}, which includes additional
shrinkage toward the scaled identity matrix, and is therefore applicable also
in the cases when $p > \sum_j n_j$. 

\section{Estimator}\label{sec:derivationalphabeta}
Let us first consider four special cases of the estimator~\eqref{eq:RSCMRDA}:
\begin{enumerate} 
    \item[(C1)] \emph{The unpooled regularized SCM estimator} omits the
	pooled SCM and only shrinks toward the scaled identity matrix: 
        \[
	    \hat \M_k(\alpha_k,\beta_k=1) 
	    = \alpha_k \S_k + (1-\alpha_k) \I_{\S_k}.
        \]
	This type of shrinkage is typically considered in single class
	covariance matrix estimation (see
	e.g.,~\cite{ledoit_well_conditioned_2004} and~\cite{Ollila2019}).

    \item[(C2)] \emph{The partially pooled estimator} omits regularization
	toward the scaled identity and only shrinks toward the pooled SCM:
        \[
            \hat \M_k(\alpha_k=1,\beta_k) 
            = \hat \M_k(\beta_k) = \beta_k \S_k + (1-\beta_k) \S.
        \]

    \item[(C3)] \emph{The fully pooled estimator} uses the pooled SCM for
	every class $k$ and shrinks it toward the scaled identity matrix:
        \[
            \hat \M_k(\alpha_k,\beta_k=0) 
			= \alpha_k \S + (1-\alpha_k) \I_{\S}.
        \]
        Such shrinkage can be considered if all classes have an identical
        distribution.

    \item[(C4)] \emph{The scaled identity estimator} uses the partially pooled
	estimator to scale the identity matrix:
        \[
            \hat \M_k(\alpha_k=0,\beta_k) 
			= \I_{(\beta_k \S_k + (1-\beta_k) \S)}.
        \]
\end{enumerate} 

Since it is clear that the tuning parameters are class-specific, we drop the
subscripts from $\alpha_k$ and $\beta_k$ and denote them from now on simply by
$\alpha$ and $\beta$.

Figure~\ref{fig:abplane} depicts the theoretical normalized MSE (NMSE),
$\NMSE(\hat \M_k(\alpha,\beta)) = \MSE(\hat
\M_k(\alpha,\beta))/\|\M_k\|_\Fro^2$, of the estimator~\eqref{eq:RSCMRDA} as a
function of the tuning parameters $(\alpha,\beta)$. The figure corresponds to
setup A of the numerical study of Section~\ref{sec:simulations}. In the
figure, the small gray dots depict the estimated tuning parameters (showing
400 realizations of the 4000 Monte Carlo trials). The blue square
$({\color{blue} \blacksquare})$ denotes the mean $(m_\alpha, m_\beta)$ of the
estimated tuning parameters over the Monte Carlo runs. The optimal tuning
parameter pair $(\alpha^\star,\beta^\star)$ is denoted by the black triangle
($\blacktriangle$). The special cases (C1--C4) correspond to the edges of the
$(\alpha,\beta)$-plane.

\def\aloptfour{3.097890e-01}
\def\beoptfour{2.048585e-01}
\def\nmseofoptfour{2.994326e-01}
\def\almeanfour{3.343960e-01}
\def\bemeanfour{2.257217e-01}
\def\nmseofmeanfour{3.002223e-01}
\def\Cxxyyfour{1.126043e+00}
\def\Cxxyfour{1.363044e-01}
\def\Cxxfour{9.699572e-01}
\def\Cyyfour{1.887015e-03}
\def\Cxyfour{-1.876994e-01}
\def\Cxfour{-6.090920e-01}
\def\Cyfour{1.680789e-05}
\def\Cfour{3.996509e-01}

\pgfplotstableread{results/AR1-estimatedtuningparametersofclass4.dat}\estimatedtuningparameters
\begin{figure}[th]
    \centering
    \begin{tikzpicture}
	\begin{axis}[colormap/cool,
	    xlabel=$\alpha$,
	    ylabel=$\beta$,
	    zlabel={NMSE},
	    width=\linewidth,
	    view={20}{20}]

	    \addplot3[surf,
	    shader=flat,
	    samples=25,
	    domain=0:1,
	    y domain=0:1,
	    ] {\Cxxyyfour*x^2*y^2 + \Cxxyfour*x^2*y +
	    \Cxxfour*x^2 + \Cyyfour*y^2 + \Cxyfour*x*y +
	    \Cxfour*x + \Cyfour*y + \Cfour};

	    \addplot3[only marks, mark size=0.2, mark=*, gray=0.5] table
	    [x=alpha, y=beta, z=nmse] {\estimatedtuningparameters};

	    \addplot3[only marks,mark size=2.0,blue,mark=square*, fill] coordinates
	    {(\almeanfour,\bemeanfour,\nmseofmeanfour)} node[anchor = north
	    west] {\footnotesize$(m_\alpha,m_\beta)$};

	    \addplot3[only marks,mark size=2.0,black,mark=triangle*]
	    coordinates {(\aloptfour,\beoptfour,\nmseofoptfour)} node[anchor =
	    south east]
	    {\footnotesize$(\alpha^\star,\beta^\star)$};

	    \addplot[] coordinates {(0.9,0.9)} node[] {$\displaystyle \S_k$};
	    \addplot[] coordinates {(0.9,0.1)} node[] {$\displaystyle \S$};
	    \addplot[] coordinates {(0.1,0.1)} node[] {$\displaystyle \I_{\S}$}; 
	    \addplot[] coordinates {(0.1,0.9)} node[] {$\displaystyle
	    \I_{\S_k}$};
	\end{axis}
    \end{tikzpicture}
    \caption{\small The theoretical NMSE as a function of the tuning
    parameters $(\alpha, \beta)$ for class 4 of the setup A in
    Section~\ref{sec:simulations}.}
    \label{fig:abplane}
\end{figure}
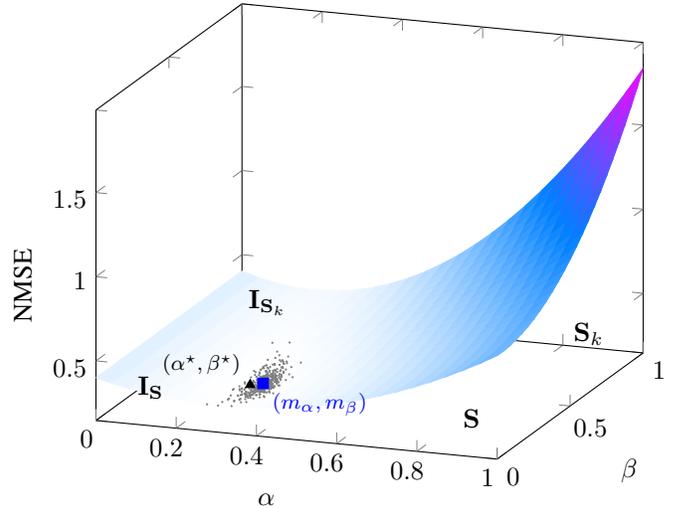
As can be observed from Figure~\ref{fig:abplane}, the optimal tuning parameter
pair is in this case closest to $\I_\S$. Hence, in this particular case it
would be a good strategy to use the pooled SCM and regularize it toward a
scaled identity matrix, i.e., to use the fully pooled estimator (C3). As can be
noted, the proposed method is able to automatically choose the tuning parameters
in a near optimal way.

Let us now give an expression for the MSE of the estimator.

\begin{theorem}\label{thm:poly}
    The MSE of the estimator~\eqref{eq:RSCMRDA} is a bivariate polynomial of
    the form 
    \begin{align*}
	\MSE(\hat \M_k(\alpha,\beta))
	&= 
	\alpha^2 \beta^2 C_{22}
	+ \alpha^2 \beta C_{21}
	+ \alpha^2 C_{20}
	+ \beta^2 C_{02}
	\\&\qquad
	+ \alpha \beta C_{11}
	+ \alpha C_{10}
	+ \beta C_{01}
	+ C_{00},
	\numberthis
	\label{eq:MSEalphabetapolynomial}
    \end{align*}
    where the coefficients, $C_{ij}$, depend on the scalars $\tr(\M_j)$,
    $\E[\|\S_j\|_\Fro^2]$, $\E[\|\I_{\S_j}\|_\Fro^2]$, and $\ip{\M_i}{\M_j}$.
\end{theorem}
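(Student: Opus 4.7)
The plan is to expand $\hat\M_k(\alpha,\beta)$ as an explicit bilinear combination in $(\alpha,\beta)$ of four fixed matrices, then expand the squared Frobenius norm of the deviation from $\M_k$, and finally take expectations. First, since the trace is linear, $\tr(\hat\M_k(\beta))=\beta\tr(\S_k)+(1-\beta)\tr(\S)$, giving $\I_{\hat\M_k(\beta)}=\beta\,\I_{\S_k}+(1-\beta)\,\I_{\S}$. Substituting into~\eqref{eq:RSCMRDA} yields
\[
\hat\M_k(\alpha,\beta) = \alpha\beta\,\S_k + \alpha(1-\beta)\,\S + (1-\alpha)\beta\,\I_{\S_k} + (1-\alpha)(1-\beta)\,\I_{\S}.
\]

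Next, I would group the $\beta$-independent and $\beta$-linear parts as $\A(\alpha):=\alpha\S+(1-\alpha)\I_{\S}$ and $\B(\alpha):=\alpha(\S_k-\S)+(1-\alpha)(\I_{\S_k}-\I_{\S})$, so that $\hat\M_k(\alpha,\beta)-\M_k=(\A(\alpha)-\M_k)+\beta\,\B(\alpha)$. Expanding the squared norm,
\[
\|\hat\M_k(\alpha,\beta)-\M_k\|_\Fro^2 = \|\A(\alpha)-\M_k\|_\Fro^2 + 2\beta\ip{\A(\alpha)-\M_k}{\B(\alpha)} + \beta^2\|\B(\alpha)\|_\Fro^2,
\]
which has degree at most $2$ in $\beta$; since $\A(\alpha)$ and $\B(\alpha)$ are affine in $\alpha$, each coefficient is degree at most $2$ in $\alpha$. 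The only bi-quadratic monomial \emph{not} appearing in the theorem is $\alpha\beta^2$. Its coefficient is the linear-in-$\alpha$ term of $\|\B(\alpha)\|_\Fro^2$, namely $2\ip{(\S_k-\I_{\S_k})-(\S-\I_{\S})}{\I_{\S_k}-\I_{\S}}$. The right factor is a scalar multiple of $\I$, while the left factor is trace-free, since $\tr(\A-\I_\A)=0$ for any square~$\A$; the inner product therefore vanishes identically. This pathwise cancellation is the only non-routine observation required.

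Finally, I would take expectations and reduce every remaining term to the four listed scalar families. Independence of the $\S_j$ across classes gives $\E[\ip{\S_i}{\S_j}]=\ip{\M_i}{\M_j}$ whenever $i\ne j$, and $\E[\ip{\S_j}{\S_j}]=\E[\|\S_j\|_\Fro^2]$. For any $\A,\B$, $\ip{\A}{\I_\B}=\tr(\A)\tr(\B)/p$, so cross-terms involving the identity-scaled matrices collapse, under expectation, to products $\tr(\M_i)\tr(\M_j)/p$ for distinct independent classes and to $\E[\|\I_{\S_j}\|_\Fro^2]$ for equal indices (using $\E[\tr(\S_j)^2]=p\,\E[\|\I_{\S_j}\|_\Fro^2]$). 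Expanding $\S=\sum_j\pi_j\S_j$ wherever it appears finishes the identification of coefficients. The main obstacle is purely combinatorial bookkeeping: each cross-term in the quadratic expansion must be routed to its monomial and scalar family, but every term falls into one of the four categories without issue.
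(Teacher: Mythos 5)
Your proposal is correct and follows essentially the same route as the paper: a direct expansion of the squared Frobenius norm of the bilinear combination, elimination of the $\alpha\beta^2$ term via the orthogonality of a trace-free matrix with a scalar multiple of $\I$, and reduction under expectation to the listed scalars using independence across classes. The only cosmetic difference is that the paper applies the decomposition $\A = \A^\I + \I_\A$ to the estimator at the outset (so the coefficient $\tilde C_{12}$ vanishes term by term), whereas you discover the same cancellation when inspecting the $\alpha\beta^2$ coefficient of $\|\B(\alpha)\|_\Fro^2$.
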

\begin{proof} 
    See Appendix~\ref{app:MSE} for the proof and the expressions for the
    coefficients $C_{ij}$.
\end{proof}
The estimation of the coefficients $C_{ij}$ is deferred to
Section~\ref{sec:estimationofparameters}. Given estimates of the coefficients,
the critical points of~\eqref{eq:MSEalphabetapolynomial} can be solved
numerically (see Appendix~\ref{app:MSE}).

A useful property of the MSE polynomial in~\eqref{eq:MSEalphabetapolynomial}
is that given a fixed value of either $\alpha$ or $\beta$, the optimization of
the remaining tuning parameter is a convex problem. This is known as
biconvexity.
\begin{theorem}\label{thm:biconvex}
    The MSE~\eqref{eq:MSEalphabetapolynomial} is a biconvex function. The
    optimal tuning parameter $\alpha^\star$ given a fixed value of $\beta \in
    [0,1]$ is
    \begin{align*}
	\alpha^\star
	&=
	\left[
	-\frac{1}{2}
	\frac{\beta C_{11} + C_{10}}{\beta^2 C_{22} + \beta C_{21} + C_{20}}
	\right]_0^1.
	\label{eq:alphagivenbeta}
	\numberthis
    \end{align*}
    Likewise, the optimal tuning parameter $\beta^\star$ given a fixed value
    of $\alpha \in [0,1]$ is
    \begin{align*}
	\beta^\star 
	&=
	\left[
	-\frac{1}{2}\frac{\alpha^2 C_{21} + \alpha C_{11}
	+ C_{01}}{\alpha^2 C_{22} + C_{02}}
	\right]_0^1.
	\label{eq:betagivenalpha}
	\numberthis
    \end{align*}
\end{theorem}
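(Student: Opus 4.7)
The plan is to verify biconvexity by exhibiting, for each fixed coordinate, the estimator as a convex combination in the remaining coordinate, then reading off the minimizers from the quadratic in one variable. Theorem~\ref{thm:poly} already gives the MSE as a bivariate polynomial of total degree at most~$4$ but of degree at most~$2$ in each variable separately, so the biconvexity question reduces entirely to showing that, with one variable frozen, the leading coefficient in the other is nonnegative.

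The key observation is that $\hat \M_k(\alpha,\beta)$ is affine in each variable. For fixed $\beta$, write
\begin{equation*}
    \hat\M_k(\alpha,\beta)
    = \alpha\,\hat\M_k(\beta) + (1-\alpha)\,\I_{\hat\M_k(\beta)},
\end{equation*}
which is affine in $\alpha$. For fixed $\alpha$, use linearity of the trace to get $\I_{\hat\M_k(\beta)} = \beta \I_{\S_k} + (1-\beta)\I_\S$, so
\begin{equation*}
    \hat\M_k(\alpha,\beta)
    = \beta\bigl[\alpha \S_k + (1-\alpha)\I_{\S_k}\bigr]
    + (1-\beta)\bigl[\alpha \S + (1-\alpha)\I_{\S}\bigr],
\end{equation*}
which is affine in $\beta$. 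Substituting each representation into the definition $\MSE = \E\|\hat\M_k(\alpha,\beta)-\M_k\|_\Fro^2$ expands a squared norm of an affine expression; this produces a quadratic whose leading coefficient is the expected squared Frobenius norm of the difference of the two endpoints and hence nonnegative. Convexity in each variable with the other fixed follows immediately.

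Given biconvexity, the minimizers are obtained by setting the partial derivatives of the polynomial in~\eqref{eq:MSEalphabetapolynomial} to zero. Differentiating with respect to $\alpha$ for fixed $\beta$ gives
\begin{equation*}
    \partial_\alpha \MSE
    = 2\alpha\bigl(\beta^2 C_{22} + \beta C_{21} + C_{20}\bigr)
    + \bigl(\beta C_{11} + C_{10}\bigr),
\end{equation*}
and the unique real critical point is the expression on the right-hand side of~\eqref{eq:alphagivenbeta} before clipping. Since the unconstrained quadratic in $\alpha$ is convex, the minimizer on the interval $[0,1]$ is obtained by projecting this critical point onto $[0,1]$ using the clamp notation $[\cdot]_0^1$, which yields~\eqref{eq:alphagivenbeta}. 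The derivation of~\eqref{eq:betagivenalpha} is entirely symmetric: differentiating in $\beta$ for fixed $\alpha$ yields $2\beta(\alpha^2 C_{22} + C_{02}) + (\alpha^2 C_{21} + \alpha C_{11} + C_{01})$, which after solving and clipping gives the stated formula.

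The only non-routine point is verifying the sign of the leading coefficients appearing in the denominators of~\eqref{eq:alphagivenbeta} and~\eqref{eq:betagivenalpha}, since a priori one could worry about dividing by zero or by a negative number. The affine-representation argument above resolves this cleanly: the coefficient $\beta^2 C_{22} + \beta C_{21} + C_{20}$ equals $\E\|\hat\M_k(\beta) - \I_{\hat\M_k(\beta)}\|_\Fro^2 \ge 0$ and $\alpha^2 C_{22} + C_{02}$ equals $\E\|[\alpha\S_k+(1-\alpha)\I_{\S_k}] - [\alpha\S+(1-\alpha)\I_\S]\|_\Fro^2 \ge 0$. The degenerate case in which either denominator vanishes corresponds to the situation where the MSE is constant in the optimized variable; there any point of $[0,1]$ is a minimizer, and the clipping convention~$[\cdot]_0^1$ together with the usual extended-real convention on~$0/0$ makes the stated formulas consistent.
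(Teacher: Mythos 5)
Your proof is correct, and while its overall skeleton (check the sign of the leading coefficient of the univariate quadratic, set the partial derivative to zero, project onto $[0,1]$) matches the paper's, the way you establish the crucial positivity is genuinely different and arguably cleaner. The paper proves $\partial_\alpha^2 L_k = 2(\beta^2 C_{22}+\beta C_{21}+C_{20})>0$ by treating it as a quadratic in $\beta$ with positive leading coefficient $C_{22}$ and then invoking the Cauchy--Schwarz inequality to show its discriminant $C_{21}^2-4C_{22}C_{20}$ is negative; similarly $\partial_\beta^2 L_k = 2(\alpha^2 C_{22}+C_{02})>0$ because $C_{22},C_{02}>0$ a.s. You instead exploit that $\hat\M_k(\alpha,\beta)$ is affine in each variable separately and identify the leading coefficients directly as $\E\|\hat\M_k(\beta)^\I\|_\Fro^2$ and $\E\|\alpha(\S_k^\I-\S^\I)+\I_{\S_k-\S}\|_\Fro^2$, i.e.\ as expected squared norms of the difference of the two affine endpoints; one can verify (using $\ip{\A^\I}{\I_\B}=0$) that these do reduce to $\beta^2C_{22}+\beta C_{21}+C_{20}$ and $\alpha^2C_{22}+C_{02}$ respectively, so no separate discriminant argument is needed. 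What your route buys is a structural interpretation of the denominators and an argument that generalizes immediately to any estimator that is a convex combination in each parameter; what the paper's route buys is \emph{strict} positivity (almost surely, for continuous distributions), hence strict convexity and a \emph{unique} minimizer in each coordinate --- a fact the paper also needs later to invoke the convergence results for the alternating minimization in Appendix~\ref{app:biconvex}. Your treatment of the degenerate case is sound (if the leading coefficient vanishes, the two endpoints coincide a.s., the estimator no longer depends on that variable, and the MSE is constant in it), but it is worth noting that the paper sidesteps this entirely by asserting the strict inequalities hold almost surely.
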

\begin{proof} 
    See Appendix~\ref{app:biconvex}.
\end{proof}
The optimal tuning parameter for the special cases (C1--C4) can be solved
using~\eqref{eq:alphagivenbeta} and~\eqref{eq:betagivenalpha} of
Theorem~\ref{thm:biconvex}.

\subsection{Partially pooled estimator without identity shrinkage}

Let us next highlight certain properties of the partially pooled
estimator~\eqref{eq:partiallypooledSCM} of (C2) corresponding to the case
$\hat\M_k(\alpha=1,\beta)$. The propositions below provide additional insights
on the optimal parameter $\beta^\star$ in this case. 
\begin{proposition}\label{proposition:betalessthan1}
    The optimal tuning parameter $\beta^\star$ given $\alpha = 1$
    satisfies $\beta^\star < 1$.
\end{proposition}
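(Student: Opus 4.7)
The plan is to work directly with the estimator $\hat\M_k(\beta)=\beta\S_k+(1-\beta)\S$ rather than the polynomial coefficients, since the MSE admits a clean closed-form derivative at $\beta=1$. By Theorem~\ref{thm:biconvex} with $\alpha=1$, the map $\beta\mapsto\MSE(\hat\M_k(1,\beta))$ is a convex quadratic on $[0,1]$, so its minimizer on the unit interval equals $1$ only if the derivative at $\beta=1$ is non-positive. I will show instead that this derivative is strictly positive, which forces $\beta^\star<1$ even after the clamping in~\eqref{eq:betagivenalpha}.

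Differentiating $f(\beta)\eqdef\E\|\beta\S_k+(1-\beta)\S-\M_k\|_\Fro^2$ yields
\begin{equation*}
    f'(\beta)=2\E\ip{\S_k-\S}{\beta\S_k+(1-\beta)\S-\M_k},
\end{equation*}
so at $\beta=1$,
\begin{equation*}
    f'(1)=2\bigl(\E\ip{\S_k}{\S_k-\M_k}-\E\ip{\S}{\S_k-\M_k}\bigr).
\end{equation*}
Using unbiasedness of $\S_k$ gives $\E\ip{\S_k}{\S_k-\M_k}=\E\|\S_k\|_\Fro^2-\|\M_k\|_\Fro^2=\E\|\S_k-\M_k\|_\Fro^2$. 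Expanding $\S=\sum_j\pi_j\S_j$ and using independence across classes together with $\E[\S_k]=\M_k$ kills every cross term with $j\ne k$, leaving only the diagonal contribution $\pi_k\E\ip{\S_k}{\S_k-\M_k}$. Combining these two calculations yields the clean identity
\begin{equation*}
    f'(1)=2(1-\pi_k)\,\E\|\S_k-\M_k\|_\Fro^2.
\end{equation*}

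The conclusion is then immediate: in any genuinely multiclass setting ($K\geq 2$) one has $0<\pi_k<1$, and the SCM $\S_k$ has strictly positive variance $\E\|\S_k-\M_k\|_\Fro^2>0$ under any non-degenerate sampling distribution, so $f'(1)>0$. Convexity of $f$ (the $\alpha=1$ slice of the biconvex MSE) implies the unconstrained minimizer lies strictly to the left of $1$, and consequently the clamped value in~\eqref{eq:betagivenalpha} also satisfies $\beta^\star<1$. The main care point is simply to verify that the cross terms $\E\ip{\S_j}{\S_k}$ for $j\ne k$ factor as $\ip{\M_j}{\M_k}$, which is where the independence of the class samples enters; everything else is routine.
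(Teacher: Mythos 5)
Your proof is correct and is essentially the paper's own argument in a different framing: the quantity $f'(1)/2=\E\ip{\S-\S_k}{\M_k-\S_k}$ that you show to be positive is exactly the paper's ``denominator minus numerator'' in~\eqref{eq:betagivenalpha}, and both arguments reduce it to $(1-\pi_k)\E\fn{\S_k-\M_k}^2>0$ via the same independence-plus-unbiasedness computation. Checking that the derivative at $\beta=1$ is positive and checking that the unconstrained ratio is less than one are equivalent for a strictly convex quadratic, so no new idea is involved.
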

\begin{proof}
    The proof follows from showing that the numerator
    of~\eqref{eq:betagivenalpha} is always less than the denominator, which is
    always positive. By setting $\alpha=1$, after some algebra, we have that
    $-1/2$ times the numerator, i.e., $(-1/2)(C_{21}+C_{11}+C_{01})$, is equal to
    $\E[\ip{\S-\S_k}{\S - \M_k}]$ and the denominator ($C_{22} + C_{02}$) is
    equal to $\E[\fn{\S - \S_k}^2]$ (See Appendix~\ref{app:MSE}). Subtracting
    the numerator from the denominator, we get
    \begin{align*}
        &\E[\fn{\S-\S_k}^2] - \E[\ip{\S - \S_k}{\S - \M_k}]
        \\
        &=\E[\ip{\S - \S_k}{\M_k - \S_k}]
        \\
        &= (1-\pi_k)(\E[\fn{\S_k}^2] - \fn{\M_k}^2)
        \\
        &= (1-\pi_k)\E[\fn{\S_k - \M_k}^2] > 0,
    \end{align*}
    which holds almost surely (a.s.) for any continuous distribution. 
\end{proof}

This has the important implication that in a multiclass problem it is always
possible to reduce the MSE of the SCM by using regularization toward the pooled
SCM.

In the special case that all of the population covariance matrices are equal,
we would like to only use the pooled SCM. This is exactly what happens as is
shown in the next proposition.
\begin{proposition}\label{proposition:equalcovariances}
    Consider multiple populations generated by the same distribution such that
    they have equal covariance matrices $\M_1 = \ldots =\M_K$ and equal sample
    sizes $\pi_1=\ldots=\pi_K$. Given that $\alpha=1$ then $\beta^\star =
    0$.
\end{proposition}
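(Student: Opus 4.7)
The plan is to apply Theorem~\ref{thm:biconvex} with $\alpha=1$ and show that, under the stated symmetry assumptions, the unclamped argument of the formula for $\beta^\star$ equals $0$, so that the clamp returns $\beta^\star=0$ (i.e.\ recovers the fully pooled SCM $\S$).

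Substituting $\alpha=1$ in \eqref{eq:betagivenalpha} gives
\[
    \beta^\star
    =
    \left[-\tfrac{1}{2}\frac{C_{21}+C_{11}+C_{01}}{C_{22}+C_{02}}\right]_0^1.
\]
From the identifications already carried out inside the proof of Proposition~\ref{proposition:betalessthan1}, the denominator equals $\E[\fn{\S-\S_k}^2]$ (almost surely positive), while $-\tfrac{1}{2}(C_{21}+C_{11}+C_{01}) = \E[\ip{\S-\S_k}{\S-\M_k}]$. So it suffices to show that $\E[\ip{\S-\S_k}{\S-\M_k}]=0$ when $\M_1=\cdots=\M_K=\M$ and $\pi_1=\cdots=\pi_K=1/K$. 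Using the intermediate identity
\[
    \E[\fn{\S-\S_k}^2] - \E[\ip{\S-\S_k}{\S-\M_k}] = (1-\pi_k)\E[\fn{\S_k-\M_k}^2]
\]
already derived in that earlier proof, the problem reduces to verifying the single equality $\E[\fn{\S-\S_k}^2]=(1-\pi_k)\E[\fn{\S_k-\M_k}^2]$ under the symmetry hypotheses.

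I would verify this last identity by direct computation. Writing $\S-\S_k = \sum_{j\neq k}\pi_j\S_j - (1-\pi_k)\S_k$ and using cross-class independence together with the bias--variance decomposition $\E[\fn{X}^2]=\fn{\E[X]}^2+\E[\fn{X-\E[X]}^2]$, the bias term vanishes since every class has mean $\M$: $\E[\S-\S_k]=(1-\pi_k)\M-(1-\pi_k)\M=0$. The variance contribution collapses to $\bigl(\sum_{j\neq k}\pi_j^2+(1-\pi_k)^2\bigr)v$, where $v=\E[\fn{\S_j-\M}^2]$ is a common constant across all $j$ by the equal-distribution and equal-sample-size assumptions. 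Plugging in $\pi_j=1/K$ gives $\sum_{j\neq k}\pi_j^2+(1-\pi_k)^2=(K-1)/K=1-\pi_k$, whence $\E[\fn{\S-\S_k}^2]=(1-\pi_k)v$, which is precisely the identity sought.

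The only real obstacle is the index bookkeeping in expanding $\E[\fn{\S-\S_k}^2]$: the equal-covariance hypothesis kills the bias term, while the equal-sample-size hypothesis lets the per-class variances be merged into the single constant $v$. Once both are used, the unclamped numerator is exactly $0$, the clamp returns $\beta^\star=0$, and the proposition follows.
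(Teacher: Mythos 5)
Your proof is correct, but it takes a somewhat different route from the paper's. The paper attacks the numerator head-on: it expands $\E[\ip{\S-\S_k}{\S-\M_k}]$ into the four terms $\E[\fn{\S}^2]-\E[\ip{\S}{\M_k}]-\E[\ip{\S_k}{\S}]+\E[\ip{\S_k}{\M_k}]$ and shows cancellation using $\sum_j\pi_j^2=\pi_k$, $\sum_{i\neq j}\pi_i\pi_j=1-\pi_k$, and $\sum_{j\neq k}\pi_j=1-\pi_k$; it then separately verifies positivity of the denominator by a second expansion. You instead recycle the identity $\E[\fn{\S-\S_k}^2]-\E[\ip{\S-\S_k}{\S-\M_k}]=(1-\pi_k)\E[\fn{\S_k-\M_k}^2]$ from Proposition~\ref{proposition:betalessthan1} (which indeed holds without the symmetry hypotheses), reducing everything to the single equality $\E[\fn{\S-\S_k}^2]=(1-\pi_k)\E[\fn{\S_k-\M_k}^2]$, which you then verify via a bias--variance decomposition: equal covariances make $\E[\S-\S_k]=\bo$, cross-class independence kills the cross terms, and equal sample sizes let you collect $\bigl(\sum_{j\neq k}\pi_j^2+(1-\pi_k)^2\bigr)v=(1-\pi_k)v$. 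The combinatorial content is the same in both arguments, but your version has two advantages: it gets the positivity of the denominator for free (since $(1-\pi_k)v>0$ a.s.), and it makes explicit which hypothesis does which job --- equal covariances eliminate the bias, equal sample sizes equalize the per-class variances --- something the paper's mechanical expansion obscures. The one step you should make explicit rather than assert is the cross-class independence used to drop the terms $\E[\ip{\S_i-\M}{\S_j-\M}]$ for $i\neq j$, though this is used routinely throughout the paper.
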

\begin{proof}
    The proof follows from showing that if the true population covariance
    matrices and the sample sizes are equal, then the numerator
    of~\eqref{eq:betagivenalpha} is zero and the denominator is non-zero.
    Observe that if $\M_k = \M_j$ and $\pi_k = \pi_j$, $\forall j$, then we have
    $\E[\fn{\S_k}^2] = \E[\fn{\S_j}^2]$ and $\ip{\M_j}{\M_k} = \fn{\M_k}^2$,
    $\forall j$. Hence, opening up the numerator $\E[\ip{\S - \S_k}{\S - \M_k}]$
    yields
    \begin{align*}
        &\E[\fn{\S}^2] - \E[\ip{\S}{\M_k}] - \E[\ip{\S_k}{\S}] +
        \E[\ip{\S_k}{\M_k}]
        \\
        &= \E[\fn{\S}^2] - \E[\ip{\S_k}{\S}]
        \\
        &= (\sum_j \pi_j^2 - \pi_k) \E[\fn{\S_k}^2] 
        + ( \sum_{i\neq j} \pi_i \pi_j - \sum_{j,j\neq k}\pi_j )\fn{\M_k}^2
        \\
        &= 0,
    \end{align*}
    which follows from $\sum_j \pi_j^2 = K\pi_k^2 = (K \pi_k) \pi_k = \pi_k$,
    $\sum_{i \neq j} \pi_i \pi_j = K(K-1) \pi_k^2 = (K\pi_k)(K-1)\pi_k =
    1-\pi_k$, and $\sum_{j,j \neq k} \pi_j = (K-1) \pi_k = 1-\pi_k$.
    Furthermore, using this result we see that the denominator, $\E[\fn{\S -
    \S_k}^2]$, is a.s. positive. Opening up the denominator, we have
    \begin{align*}
        & \E[\fn{\S}^2] + \E[\fn{\S_k}^2] - 2\E[\ip{\S}{\S_k}]
        \\
        &= 
        \E[\fn{\S_k}^2] - \E[\ip{\S}{\S_k}]
        \\
        &= (1-\pi_k) \E[\fn{\S_k}^2] - \sum_{j,j\neq k} \pi_j \ip{\M_k}{\M_j}
        \\
        &= (1-\pi_k) \left(\E[\fn{\S_k}^2] - \fn{\M_k}^2\right) > 0 ~\text{a.s.}
    \end{align*}
\end{proof}

\subsection{Streamlined analytical estimator}\label{sec:streamlined}
An alternative estimator to~\eqref{eq:RSCMRDA} can be obtained if we change
the $\alpha$-regularization target so that the estimator becomes
\begin{equation}
    \tilde \M_k(\alpha,\beta) = \alpha \hat \M_k(\beta) + (1-\alpha) \I_{\T},
    \label{eq:RSCMstreamlined}
\end{equation}
where $\T \in \{\S_k, \S\}$ and $\hat \M_k(\beta)$ is defined
in~\eqref{eq:partiallypooledSCM}. This simplifies the expression for the MSE
and allows for an analytical solution for the tuning parameters.
In~\cite{Halbe2013} and~\cite{Ikeda2016}, somewhat similar formulations were
used with different targets in a GMM and a single class covariance matrix
estimation setting, respectively. Note that the difference
between~\eqref{eq:RSCMstreamlined} and~\eqref{eq:RSCMRDA} is in the scale of the
identity target. The trace of~\eqref{eq:RSCMstreamlined} (sum of the
eigenvalues) is dependent on $\alpha$, whereas in~\eqref{eq:RSCMRDA} this is not
the case. However, when $\tr(\I_\T) \approx \tr(\I_{\hat \M_k(\beta)})$, the
performance of the two estimators is expected to be similar. The MSE
of~\eqref{eq:RSCMstreamlined} is given in the next theorem.

\begin{theorem}\label{thm:polysimple}
    The theoretical MSE of the (streamlined analytical)
    estimator~\eqref{eq:RSCMstreamlined} is a bivariate polynomial of the form
    \begin{align*}
	\MSE(\tilde \M_k(\alpha,\beta))
	&= 
	\alpha^2 \beta^2 B_{22}
	+ \alpha^2 \beta B_{21}
	+ \alpha^2 B_{20}
	\\&\qquad
	+ \alpha \beta B_{11}
	+ \alpha B_{10}
	+ B_{00}.
    \end{align*}
    The coefficients $B_{ij}$ depend on the scalars $\tr(\M_j)$,
    $\E[\|\S_j\|_\Fro^2]$, $\E[\|\I_{\S_j}\|_\Fro^2]$, and $\ip{\M_i}{\M_j}$. If
    $(\alpha^\star,\beta^\star) \in (0,1) \times (0,1)$, the optimal tuning
    parameters $(\alpha^\star,\beta^\star)$ minimizing the MSE are
    \begin{align*}
	\alpha^\star
	= \frac{2 B_{10} B_{22} - B_{11}B_{21}}
	{B_{21}^2 - 4B_{20} B_{22}}
	~\text{and}~
	\beta^\star
	= \frac{2 B_{11} B_{20} - B_{10} B_{21}}
	{2B_{10} B_{22} - B_{11} B_{21}}. 
    \end{align*}
    Otherwise, the optimal parameters are on the boundary of the feasible set
    $[0,1] \times [0,1]$, and are given by one of the following options
    \begin{enumerate}
	\item[i)]
	    $\alpha^\star = \bigg[-\dfrac{1}{2}\dfrac{B_{10}}{B_{20}}
	    \bigg]_0^1$ and $\beta^\star = 0$,
	\item[ii)]
	    $\alpha^\star = \bigg[-\dfrac{1}{2}\dfrac{B_{10} + B_{11}}{B_{22}
	    + B_{21} + B_{20}} \bigg]_0^1$ and $\beta^\star = 1$,
	\item[iii)]
	    $\alpha^\star = 1$ and $\beta^\star =
	    \left[-\dfrac{1}{2}\dfrac{B_{21} + B_{11}}{B_{22}}\right]_0^1$,
	\item[iv)] 
	    $\alpha^\star = 0$, which implies $\tilde \M = \I_\T$ and that the
	    MSE does not depend on $\beta$.
    \end{enumerate} 
\end{theorem}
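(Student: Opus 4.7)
The plan is to mirror the derivation used for Theorem~\ref{thm:poly}, while exploiting the key structural simplification that the target $\I_\T$ in~\eqref{eq:RSCMstreamlined} depends on neither $\alpha$ nor $\beta$. Writing
\[
    \tilde\M_k(\alpha,\beta) - \M_k
    = \alpha\beta(\S_k-\S) + \alpha(\S-\I_\T) + (\I_\T-\M_k),
\]
and expanding $\E\fn{\,\cdot\,}^2$ produces a bivariate polynomial in $(\alpha,\beta)$ whose monomials are dictated by the $(\alpha,\beta)$-factors of the three summands. Since only the first summand carries $\beta$, and always coupled with $\alpha$, the pure monomials $\beta^2$ and $\beta$ cannot appear, so the polynomial contains exactly the six terms $\alpha^2\beta^2,\alpha^2\beta,\alpha^2,\alpha\beta,\alpha,1$. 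Reading off the inner products gives $B_{22}=\E\fn{\S_k-\S}^2$, $B_{20}=\E\fn{\S-\I_\T}^2$, $B_{00}=\E\fn{\I_\T-\M_k}^2$, together with analogous cross terms $B_{21}$, $B_{11}$, $B_{10}$, each of which reduces to scalars of the form $\tr(\M_j)$, $\E\fn{\S_j}^2$, $\E\fn{\I_{\S_j}}^2$, and $\ip{\M_i}{\M_j}$ exactly as in Appendix~\ref{app:MSE}.

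For an interior minimizer I would set both partial derivatives to zero:
\begin{align*}
    \partial_\alpha \MSE &= 2\alpha(\beta^2 B_{22}+\beta B_{21}+B_{20}) + (\beta B_{11}+B_{10}) = 0,\\
    \partial_\beta \MSE &= \alpha\bigl(2\alpha\beta B_{22}+\alpha B_{21}+B_{11}\bigr) = 0.
\end{align*}
Under the interior assumption $\alpha\neq 0$ the $\beta$-equation gives $\alpha = -B_{11}/(2\beta B_{22}+B_{21})$. Substituting into the $\alpha$-equation eliminates the $\beta^2$ term and yields a single linear equation in $\beta$, whose solution is the claimed $\beta^\star$; back-substitution and simplification produce the claimed $\alpha^\star$.

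For boundary minimizers I would restrict the MSE to each edge of $[0,1]^2$. On $\beta=0$, on $\beta=1$, and on $\alpha=1$ the MSE reduces to a univariate quadratic in the free variable, whose unconstrained vertex, projected onto $[0,1]$ via the clamp operator, gives exactly cases (i)--(iii). On $\alpha=0$ the MSE collapses to $B_{00}$ and is independent of $\beta$, yielding case (iv). The global minimizer is then the point among these candidates achieving the smallest MSE.

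The routine part is the algebraic elimination. The step I anticipate needing the most care is justifying that the interior critical point is actually the global minimizer on $(0,1)^2$ and not a saddle: the MSE is biconvex (cf.\ Theorem~\ref{thm:biconvex}) but not jointly convex, so I would argue from the fact that $\MSE$ is nonnegative and continuous on the compact set $[0,1]^2$ and has a unique interior stationary point, whence its global minimum is attained either at this point or on the boundary. The theorem statement simply enumerates these possibilities, and the correct case is selected by comparing the MSE values at the interior solution and at the four boundary candidates.
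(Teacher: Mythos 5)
Your proposal matches the paper's proof essentially step for step: the same decomposition $\tilde\M_k(\alpha,\beta)-\M_k = \alpha\beta(\S_k-\S)+\alpha(\S-\I_\T)+(\I_\T-\M_k)$, the same six-term polynomial with the same identification of the coefficients, the same factored gradient equations, and the same elimination (the paper solves the $\beta$-equation for $\beta$ in terms of $\alpha$ and obtains $\alpha^\star$ first, whereas you solve it for $\alpha$ in terms of $\beta$ and obtain $\beta^\star$ first, which is immaterial). The only substantive difference is the final verification: where you invoke compactness and enumeration of candidates, the paper explicitly computes the Hessian and uses the Cauchy--Schwarz inequality to show that the interior critical point is a local minimum while the critical point at $\alpha=0$ is a saddle point --- your argument nonetheless suffices for the conditional form in which the theorem is stated.
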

\begin{proof}
    See Appendix~\ref{app:streamlined} for the proof and the expressions for the
    coefficients $B_{ij}$. 
\end{proof}

\section{Estimating the tuning parameters}\label{sec:estimationofparameters} 

The MSE and the optimal tuning parameters in~Theorem~\ref{thm:poly}
and~Theorem~\ref{thm:polysimple} depend on the coefficients $C_{ij}$ and
$B_{ij}$, which in turn are functions of the unknown scalars 
\begin{equation} \label{eq:unknown_params} 
    \tr(\M_j),~\E[\|\S_j\|_\Fro^2],~\E[\|\I_{\S_j}\|_\Fro^2]
    ~\text{and}~\ip{\M_i}{\M_j}. 
\end{equation} 
Hence, estimates of the optimal tuning parameters can be formed by a plug-in
method, i.e., replacing the above unknown parameters by their estimates. Such
estimates are constructed in this section under the very general assumption
that the samples are generated from unspecified elliptical distributions with
finite fourth-order moments.

Elliptical distributions are a location-scale family of spherical
distributions generalizing many common distributions, such as the multivariate
normal distribution (MVN) and Student's $t$-distribution to name a few. The
probability density function (p.d.f.) of an elliptically distributed sample
$\x$ from class $k$ is up to a normalizing constant of the form
\begin{align*}
    |\M_k|^{-1/2} g_k \left( (\x - \bmu_k)^\top \M_k^{-1} (\x - \bmu_k)
    \right),
\end{align*}
where $\bmu_k$ and $\M_k$ are the mean and the positive definite covariance
matrix of the distribution, respectively. The function $g_k : \real_{\geq 0}
\rightarrow \real_{\geq 0}$ is called the \emph{density generator}, which
determines the distribution. For example, $g_k(t) = \exp(-t/2)$ defines the
MVN distribution. For references about elliptical distributions see,
e.g.,~\cite{fang2018symmetric} and~\cite{frahm2004generalized}.

Estimates of the unknown scalars, $\tr(\M_k)$ and $\fn{\M_k}^2$, can be
obtained by finding estimates of the \emph{scale} parameter $\eta_k$ and the
\emph{sphericity} parameter $\gamma_k$ defined as
\begin{equation*}
    \eta_k = \frac{\tr(\M_k)}{p}
    ~\text{and}~
    \gamma_k = \frac{\fn{\M_k}^2}{\fn{\I_{\M_k}}^2} \in [1,p].
\end{equation*}
Observe that $\fn{\I_{\M_k}}^2 = p\eta_k^{2}$ and $\fn{\M_k}^2 = p \gamma_k
\eta_k^2$. Under the ellipticity assumption, the unknowns
$\E[\|\S_k\|_\Fro^2]$ and $\E[\|\I_{\S_k}\|_\Fro^2]$ are functions of the
scale, sphericity, and the elliptical kurtosis as shown in the next lemma. The
elliptical kurtosis is defined as $\kappa_k = (1/3)\text{kurt}(x_i)$, where
$\text{kurt}(x_i)$ is the excess kurtosis of any marginal variable $x_i$ of a
random vector from the $k$th class.
\begin{lemma}\label{thm:EtrSk2}
    \cite[Lemma 2.]{Ollila2019}
    Consider an elliptical distribution with covariance matrix $\M_k$ and
    finite fourth-order moments. Then, 
    \begin{align*}
	\E[\|\S_k\|_\Fro^2] 
	&= p\eta_k^2\left(\tau_{1k} p+(1+\tau_{1k}+\tau_{2k}) \gamma_k \right)
	~\text{and}
	\\
	\E[\|\I_{\S_k}\|_\Fro^2]
	&= \eta_k^2 \left( (1+\tau_{2k})p + 2\tau_{1k} \gamma_k\right) 
	,
    \end{align*}
    where $\tau_{1k} = 1/(n_k-1) + \kappa_k/n_k$, and $\tau_{2k} =
    \kappa_k/n_k$.
\end{lemma}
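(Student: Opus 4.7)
My plan is to reduce both identities to a single computation, namely $\cov(\ve(\S_k))$. Two general facts effect the reduction: since $\S_k$ is unbiased for $\M_k$,
\begin{equation*}
\E[\fn{\S_k}^2] = \fn{\M_k}^2 + \tr\big(\cov(\ve(\S_k))\big),
\end{equation*}
and since $\tr(\S_k) = \ve(\I)^\top \ve(\S_k)$,
\begin{equation*}
\E[\fn{\I_{\S_k}}^2] = \tfrac{1}{p}\E[(\tr \S_k)^2] = \tfrac{1}{p}\Big[(\tr \M_k)^2 + \ve(\I)^\top \cov(\ve(\S_k))\,\ve(\I)\Big].
\end{equation*}
So the whole lemma follows once $\cov(\ve(\S_k))$ is known in closed form.

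The central task is therefore to establish the identity
\begin{equation*}
\cov(\ve(\S_k)) = \tau_{1k}(\I_{p^2}+\K_{p,p})(\M_k\otimes\M_k) + \tau_{2k}\,\ve(\M_k)\ve(\M_k)^\top,
\end{equation*}
where $\K_{p,p}$ is the commutation matrix. I would derive it from the elliptical stochastic representation $\x_{ik}-\bmu_k = r_i\,\M_k^{1/2}\u_i$, with $\u_i$ uniform on the unit sphere in $\real^p$, independent of $r_i\ge 0$ satisfying $\E[r_i^2]=p$ and $\E[r_i^4]=p(p+2)(1+\kappa_k)$. The sphere identity
\begin{equation*}
\E[\u_i\u_i^\top\otimes\u_i\u_i^\top] = \tfrac{1}{p(p+2)}\big(\I_{p^2}+\K_{p,p}+\ve(\I)\ve(\I)^\top\big)
\end{equation*}
yields the per-sample fourth-moment tensor $(1+\kappa_k)\big[(\I+\K)(\M_k\otimes\M_k)+\ve(\M_k)\ve(\M_k)^\top\big]$. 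Plugging this into $\cov\!\big(\ve((\y_i-\bar\y)(\y_i-\bar\y)^\top)\big)$ via the centering projection $\P=\I_{n_k}-(1/n_k)\bi\bi^\top$ (idempotent of rank $n_k-1$) and averaging over the $n_k$ iid samples produces the stated structure, with the Wishart-type coefficient $1/(n_k-1)$ appearing in $\tau_{1k}$ and the sample-averaged kurtosis $\kappa_k/n_k$ appearing in both $\tau_{1k}$ and $\tau_{2k}$.

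Substitution into the two reduction identities above is then routine using the standard Kronecker-calculus facts $\tr((\I+\K)(\M\otimes\M)) = (\tr\M)^2 + \tr(\M^2)$, $\ve(\I)^\top(\M\otimes\M)\ve(\I) = \ve(\I)^\top\K(\M\otimes\M)\ve(\I) = \tr(\M^2)$ (the second equality via $\K\ve(\I)=\ve(\I)$), and $\tr(\ve(\M)\ve(\M)^\top)=\tr(\M^2)$. Writing $\tr(\M_k)=p\eta_k$ and $\tr(\M_k^2)=p\gamma_k\eta_k^2$ and collecting terms recovers the two displayed formulas exactly. The main obstacle is the $\cov(\ve(\S_k))$ identity itself: the combinatorics of the four-index sums that survive after applying $\P$ must be tracked carefully to isolate the coefficients $\tau_{1k}$ and $\tau_{2k}$ in the right linear combination of $(\I+\K)(\M_k\otimes\M_k)$ and $\ve(\M_k)\ve(\M_k)^\top$. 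Everything downstream is trace algebra on Kronecker products.
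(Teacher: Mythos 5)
Your proposal is correct, but be aware that the paper contains no proof of this lemma to compare against: it is imported verbatim as \cite[Lemma 2.]{Ollila2019}, and the derivation lives in that reference. Your route — stochastic representation $\x_{ik}-\bmu_k = r_i\M_k^{1/2}\u_i$ with $\E[r_i^2]=p$, $\E[r_i^4]=p(p+2)(1+\kappa_k)$, the fourth-moment identity for $\u_i$ on the sphere, and reduction of both claims to $\cov(\ve(\S_k))$ — is the standard one and is essentially how the cited reference proceeds, so you have in effect supplied the missing argument rather than found an alternative. I verified the endpoint: with $\cov(\ve(\S_k))=\tau_{1k}(\I_{p^2}+\K)(\M_k\otimes\M_k)+\tau_{2k}\ve(\M_k)\ve(\M_k)^\top$, your two reduction identities and the Kronecker traces give precisely $\E[\fn{\S_k}^2]=\tau_{1k}(\tr\M_k)^2+(1+\tau_{1k}+\tau_{2k})\tr(\M_k^2)$ and $\E[(\tr\S_k)^2]=(1+\tau_{2k})(\tr\M_k)^2+2\tau_{1k}\tr(\M_k^2)$, which match the lemma after substituting $\tr\M_k=p\eta_k$ and $\tr(\M_k^2)=p\gamma_k\eta_k^2$. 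The one step you leave as a sketch — that the centering projection $\mat{P}=\I_{n_k}-(1/n_k)\bi\bi^\top$ produces exactly the finite-sample coefficients $1/(n_k-1)+\kappa_k/n_k$ and $\kappa_k/n_k$ — does check out: expanding $(n_k-1)^2\fn{\S_k}^2=\sum_{i,j,k,l}P_{ij}P_{kl}(\y_j^\top\y_k)(\y_l^\top\y_i)$ and using $\sum_a P_{aa}^2=(n_k-1)^2/n_k$, $\sum_{a\neq b}P_{aa}P_{bb}=(n_k-1)^3/n_k$, $\sum_{a\neq b}P_{ab}^2=(n_k-1)/n_k$ recovers the stated $\tau_{1k},\tau_{2k}$ exactly (not merely asymptotically). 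Two trivial omissions: for the second reduction you also need $\ve(\I)^\top\ve(\M_k)\ve(\M_k)^\top\ve(\I)=(\tr\M_k)^2$ (you only listed the trace of $\ve(\M_k)\ve(\M_k)^\top$), and you should say explicitly that $\fn{\I_{\S_k}}^2=(\tr\S_k)^2/p$ because $\fn{\I}^2=p$. Neither affects correctness.
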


In the following subsections, we show how to estimate the unknown parameters,
$\eta_k$, $\gamma_k$, $\kappa_k$, and $\ip{\M_i}{\M_j}$. We use the same
approach as in~\cite{raninen2020linear}.

\subsection{Estimation of the elliptical kurtosis}
The elliptical kurtosis of class $k$ is estimated by the sample average
\begin{equation*}
    \hat \kappa_k 
    = 
    \max \Bigg\{ 
    \frac{1}{3 p} \sum_{j=1}^p \hat k_j 
    ~,~
    - \frac{2}{p+2}
    \Bigg\},
\end{equation*}
where $\hat k_j = m^{(4)}_j/ \big(m_j^{(2)} \big)^2 - 3$ is the sample
estimate of the kurtosis of the $j$th variable (of class $k$), and $m^{(q)}_j
= \frac{1}{n_k} \sum_{i=1}^{n_k}((\x_{ik})_j - (\bar\x_k)_j)^q$ is the $q$th
order sample moment. Here, the notation $(\cdot)_j$ picks the $j$th variable
from the vector. The maximum constraint ensures that $\hat \kappa_k$ respects
the theoretical lower bound of $-2/(p+2)$~\cite{bentler1986greatest}. Note
that, the finite fourth-order moments are required in order to have a finite
elliptical kurtosis.

\subsection{Estimation of the sphericity}
It would be natural to use the SCM in order to develop an estimator for
$\gamma_k$. However,
following~\cite{zhang2016automatic},~\cite{Ollila2019},~\cite{Raninen2018},
and~\cite{raninen2020linear}, we use the following simple and well performing
estimator
\begin{equation*}
    \hat \gamma_k = \left[\frac{p n_k}{n_k-1}\Big(\|\tilde\S_k\|_\Fro^2
    - \frac{1}{n_k}\Big)\right]_1^p,
\end{equation*}
which uses the \emph{sample spatial sign covariance matrix} (SSCM) defined as
\begin{equation*}
    \tilde\S_k 
    = \frac{1}{n_k} \sum_{i=1}^{n_k}
    \frac{(\x_{ik} - \bmu_k){(\x_{ik}- \bmu_k)}^\top}
    {\norm{\x_{ik}- \bmu_k}^2},
\end{equation*}
where the mean $\bmu_k$ is estimated by the spatial median~\cite{Brown1983}
$\tilde \bmu_k = \arg \min\limits_{\bmu} \sum_{i=1}^{n_k} \norm{\x_{ik} -
\bmu}$. Particularly, in~\cite{Ollila2019}, the SSCM based sphericity estimator
was compared to a SCM based sphericity estimator, when sampling from elliptical
distributions. The SSCM based estimator performed better in all cases except
when sampling from a MVN distribution. Furthermore, when $\bmu$ is known,
in~\cite{raninen2020linear} it was shown that the expectation of the estimator
is asymptotically unbiased as the dimension grows if $\gamma_k/p \to 0$ as $p
\to \infty$. This assumption was shown to hold, for example, for the first order
autoregressive (AR(1)) covariance matrix but not for the compound symmetry (CS)
covariance matrix (see Section~\ref{sec:simulations} for the definitions of
these covariance matrix structures)~\cite{raninen2020linear}.

\subsection{Estimation of the scale and the inner products}
The scale $\eta_k$ as well as the inner products $\ip{\M_i}{\M_j}$, for $i
\neq j$, can be estimated by using the SCMs, i.e., by $\hat \eta_k =
\tr(\S_k)/p$ and $\ip{\S_i}{\S_j}$, respectively. However, regarding the inner
products $\ip{\M_i}{\M_j}$, we use the same SSCM based estimator as
in~\cite{raninen2020linear}, which is $\hat\eta_i \hat\eta_j p^2 \ip{\tilde
\S_i}{\tilde \S_j}$. When $i=j$, we estimate
$\ip{\M_k}{\M_k}=\|\M_k\|_\Fro^2$, by $p\hat \gamma_k \hat \eta_k^2$.

\section{Practical considerations}\label{sec:practicalimplementation}
In this section, we first discuss how to compute the estimates of the optimal
tuning parameters in practise. Then, we discuss regularizing the tuning
parameters themselves, which will in effect make the method compatible with
RDA~\cite{Friedman1989}.

\subsection{Computation of the optimal tuning
parameters}\label{sec:computationoftuningparameters}

A straightforward way of solving for the optimal tuning
parameters~\eqref{eq:alphabetaoptimal} for the proposed
estimator~\eqref{eq:RSCMRDA} is to form a two-dimensional grid of
$(\alpha,\beta) \in [0,1] \times [0,1]$ and choose the point, which yields the
minimum (estimated) MSE in~\eqref{eq:MSEalphabetapolynomial}. The solution can
further be fine-tuned via an alternating convex minimization by iterating
between~\eqref{eq:alphagivenbeta} and~\eqref{eq:betagivenalpha}. Convergence of
the iterations is addressed in Appendix~\ref{app:biconvex}. This method of
finding the optimal tuning parameters was used in the simulations of
Section~\ref{sec:simulations}, where the method is denoted by \textbf{POLY} (due
to the polynomial structure of the MSE in Theorem~\ref{thm:poly}). In the
simulations, the particular $(\alpha,\beta)$-grid was constructed from all pairs
in the set $\{0,0.05,\ldots,1\} \ni (\alpha,\beta)$. 

An alternative way of solving for the optimal tuning parameters is to find the
critical points of the MSE polynomial~\eqref{eq:MSEalphabetapolynomial}, which
is addressed in Appendix~\ref{app:MSE}.

Given estimates for the coefficients $C_{ij}$, evaluating the
equations~\eqref{eq:MSEalphabetapolynomial},~\eqref{eq:alphagivenbeta},
and~\eqref{eq:betagivenalpha} is computationally very light. Most of the
computation is due to estimating the coefficients and not in optimizing the
tuning parameters.

The streamlined analytical estimator~\eqref{eq:RSCMstreamlined} with $\T =
\S$, which is discussed in Section~\ref{sec:streamlined}, is denoted by
\textbf{POLYs}. In this case, the tuning parameters are found using the
plug-in estimates of the optimal values stated in
Theorem~\ref{thm:polysimple}. 

\subsection{Averaging of the tuning parameters}\label{sec:averaging}
The estimation accuracy of the optimal tuning parameters $(\alpha^\star_k,
\beta^\star_k)_{k=1}^K$ depend on the estimates of the statistical population
parameters discussed in Section~\ref{sec:estimationofparameters}. If the
estimation of these parameters is difficult for some data sets, which can
happen if the elliptical distribution assumption does not fit the data well,
regularization of the tuning parameters themselves can be useful. A possible
way to accomplish this is to average the tuning parameters of the classes by
using $\bar{\hat \alpha} = \frac{1}{K}\sum \hat \alpha_k$ and $\bar{\hat
\beta} =\frac{1}{K} \sum \hat \beta_k$ in place of the class-specific tuning
parameters. Averaging the estimated class tuning parameters is reasonable as
long as the optimal tuning parameters are not too different from each other,
which is the case when the class covariance matrices are similar. If the
population covariance matrices are very different from each other, averaging
the estimated tuning parameters might degrade the performance. However, by
using $(\bar{\hat \alpha}, \bar{\hat \beta})$ as common tuning parameters for
all classes in~\eqref{eq:RSCMRDA}, the resulting RSCM estimator will be
compatible with RDA proposed in~\cite{Friedman1989}. Thus, this method can be
used instead of cross-validation as an alternative way for choosing the tuning
parameters for RDA. In Section~\ref{sec:simulations}, we illustrate that this
approach enables a significant speed up in the computation of RDA with no
effective loss in performance.

The averaged version of the estimator~\eqref{eq:RSCMRDA}, which uses the mean
of the estimated tuning parameters for all classes is denoted by
\textbf{POLY-Ave}. Regarding the streamlined analytical estimator
of~\eqref{eq:RSCMstreamlined}, the averaged version is denoted by
\textbf{POLYs-Ave}. 

\section{Numerical studies}\label{sec:simulations}

\pgfplotstableread{results/MIXED-estimatedtuningparametersofclass1.dat}\classone
\pgfplotstableread{results/MIXED-estimatedtuningparametersofclass2.dat}\classtwo
\pgfplotstableread{results/MIXED-estimatedtuningparametersofclass3.dat}\classthree
\pgfplotstableread{results/MIXED-estimatedtuningparametersofclass4.dat}\classfour
\def\aloptone{4.706538e-01}
\def\beoptone{4.419414e-01}
\def\nmseofoptone{3.306596e-01}
\def\almeanone{4.930519e-01}
\def\bemeanone{4.224059e-01}
\def\nmseofmeanone{3.310449e-01}
\def\Cxxyyone{1.231391e+00}
\def\Cxxyone{-2.682701e-01}
\def\Cxxone{7.690333e-01}
\def\Cyyone{1.138469e-03}
\def\Cxyone{-3.874665e-01}
\def\Cxone{-6.674481e-01}
\def\Cyone{-3.158870e-04}
\def\Cone{5.279421e-01}

\def\alopttwo{4.508684e-01}
\def\beopttwo{3.699300e-01}
\def\nmseofopttwo{3.455084e-01}
\def\almeantwo{4.824149e-01}
\def\bemeantwo{3.770785e-01}
\def\nmseofmeantwo{3.465112e-01}
\def\Cxxyytwo{1.351423e+00}
\def\Cxxytwo{-1.482376e-01}
\def\Cxxtwo{7.690333e-01}
\def\Cyytwo{1.741647e-03}
\def\Cxytwo{-3.874665e-01}
\def\Cxtwo{-6.674481e-01}
\def\Cytwo{2.872914e-04}
\def\Ctwo{5.279421e-01}

\def\aloptthree{6.754718e-01}
\def\beoptthree{4.194770e-01}
\def\nmseofoptthree{3.274568e-01}
\def\almeanthree{5.887563e-01}
\def\bemeanthree{4.044308e-01}
\def\nmseofmeanthree{3.337467e-01}
\def\Cxxyythree{8.737208e-01}
\def\Cxxythree{1.045536e-01}
\def\Cxxthree{5.442932e-01}
\def\Cyythree{8.241782e-04}
\def\Cxythree{-5.664716e-01}
\def\Cxthree{-7.646321e-01}
\def\Cythree{-2.051607e-04}
\def\Cthree{6.658950e-01}

\def\aloptfour{6.629747e-01}
\def\beoptfour{3.448618e-01}
\def\nmseofoptfour{3.478985e-01}
\def\almeanfour{5.808437e-01}
\def\bemeanfour{3.608357e-01}
\def\nmseofmeanfour{3.522181e-01}
\def\Cxxyyfour{9.594023e-01}
\def\Cxxyfour{1.902351e-01}
\def\Cxxfour{5.442932e-01}
\def\Cyyfour{1.254738e-03}
\def\Cxyfour{-5.664716e-01}
\def\Cxfour{-7.646321e-01}
\def\Cyfour{2.253995e-04}
\def\Cfour{6.658950e-01}

\usepgfplotslibrary{groupplots}
\begin{figure*}[th]
    \centering
    \begin{tikzpicture}
	\begin{groupplot}[group style={group size=4 by 1},
	    tick label style={font=\footnotesize},
	    height=5cm,
	    width=5cm,
	    colormap/cool,
	    xlabel=$\alpha$,
	    ylabel=$\beta$,
	    view={20}{20},
	    ]
	    \nextgroupplot[title={Class 1},
	    zlabel={NMSE},
	    ]
	    \addplot3[surf,
	    shader=flat,
	    samples=25,
	    domain=0:1,
	    y domain=0:1,
	    ] 
	    {\Cxxyyone*x^2*y^2 + \Cxxyone*x^2*y + \Cxxone*x^2 +
	    \Cyyone*y^2 + \Cxyone*x*y + \Cxone*x + \Cyone*y + \Cone};

	    \addplot3[only marks, mark size=0.2, mark=*, gray=0.5] table
	    [x=alpha, y=beta, z=nmse] {\classone};

	    \addplot3[only marks,mark size=2.0,blue,mark=square*, fill] coordinates
	    {(\almeanone,\bemeanone,\nmseofmeanone)};

	    \addplot3[only marks,mark size=2.0,black,mark=triangle*] coordinates
	    {(\aloptone,\beoptone,\nmseofoptone)};

	    \nextgroupplot[title={Class 2},]
	    \addplot3[surf,
	    shader=flat,
	    samples=25,
	    domain=0:1,
	    y domain=0:1,
	    ] 
	    {\Cxxyytwo*x^2*y^2 + \Cxxytwo*x^2*y + \Cxxtwo*x^2 +
	    \Cyytwo*y^2 + \Cxytwo*x*y + \Cxtwo*x + \Cytwo*y +
	    \Ctwo};

	    \addplot3[only marks, mark size=0.2, mark=*, gray=0.5] table
	    [x=alpha, y=beta, z=nmse] {\classtwo};

	    \addplot3[only marks,mark size=2.0,blue,mark=square*, fill] coordinates
	    {(\almeantwo,\bemeantwo,\nmseofmeantwo)};

	    \addplot3[only marks,mark size=2.0,black,mark=triangle*] coordinates
	    {(\alopttwo,\beopttwo,\nmseofopttwo)};

	    \nextgroupplot[title={Class 3},]
	    \addplot3[surf,
	    shader=flat,
	    samples=25,
	    domain=0:1,
	    y domain=0:1,
	    ] 
	    {\Cxxyythree*x^2*y^2 + \Cxxythree*x^2*y +
	    \Cxxthree*x^2 + \Cyythree*y^2 + \Cxythree*x*y +
	    \Cxthree*x + \Cythree*y + \Cthree};

	    \addplot3[only marks, mark size=0.2, mark=*, gray=0.5]
	    table [x=alpha, y=beta, z=nmse] {\classthree};

	    \addplot3[only marks,mark size=2.0,blue,mark=square*, fill] coordinates
	    {(\almeanthree,\bemeanthree,\nmseofmeanthree)};

	    \addplot3[only marks,mark size=2.0,black,mark=triangle*] coordinates
	    {(\aloptthree,\beoptthree,\nmseofoptthree)};

	    \nextgroupplot[title={Class 4},]
	    \addplot3[surf,
	    shader=flat,
	    samples=25,
	    domain=0:1,
	    y domain=0:1,
	    ] 
	    {\Cxxyyfour*x^2*y^2 + \Cxxyfour*x^2*y + \Cxxfour*x^2 +
	    \Cyyfour*y^2 + \Cxyfour*x*y + \Cxfour*x + \Cyfour*y + \Cfour};

	    \addplot3[only marks, mark size=0.2, mark=*, gray=0.5] table
	    [x=alpha, y=beta, z=nmse] {\classfour};

	    \addplot3[only marks,mark size=2.0,blue,mark=square*, fill] coordinates
	    {(\almeanfour,\bemeanfour,\nmseofmeanfour)};

	    \addplot3[only marks,mark size=2.0,black,mark=triangle*] coordinates
	    {(\aloptfour,\beoptfour,\nmseofoptfour)};

	\end{groupplot}
    \end{tikzpicture}
    \caption{\small The theoretical NMSE as a function of the tuning
    parameters $(\alpha, \beta)$ for the setup C (mixed setup). The black
    triangle ($\blacktriangle$) denotes the optimal tuning parameter pair. The
    first 400 realizations of the estimated tuning parameters over the 4000
    Monte Carlo trials are shown as the gray dots. The blue square
    $({\color{blue} \blacksquare})$ denotes the mean of the estimated tuning
    parameters.}\label{fig:mixednmse}
\end{figure*}
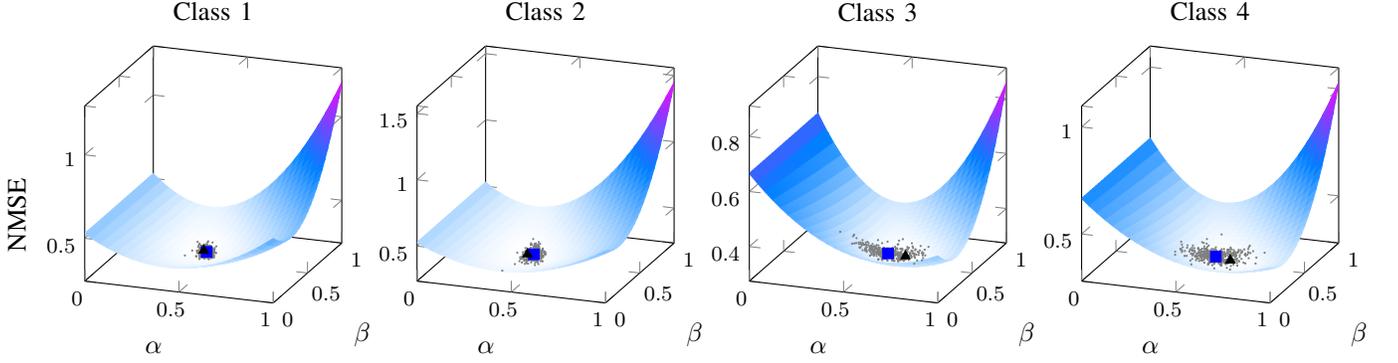

In this section, the performance of the proposed estimators is reported and
compared against other competing methods. First, in
Subsection~\ref{sec:syntheticsimulations}, the MSE performance of the method
is examined with synthetic simulations. Then, in
Subsection~\ref{sec:RDAsimulations} the method is applied to RDA
classification and compared to cross-validation in choosing the tuning
parameters. 

\subsection{Synthetic simulations}\label{sec:syntheticsimulations}

\begin{table}
    \centering
    \caption{\small Empirical NMSE ($\times 10$) of setups A, B, C, and D.
    The sample standard deviation ($\times 10$) is given in the
    parenthesis.}\label{table:NMSE}
    \begin{scriptsize}
	\tabcolsep=0.12cm
	\begin{tabular}{lcccc|c}
	    \toprule
	    & Class 1 & Class 2 & Class 3 & Class 4 & Sum\\
	    \midrule
	    \multicolumn{6}{l}{\underline{\emph{Setup A. AR(1)}}} \\
	SCM & 115.4 \scriptsize{(73.8)} & 51.5 \scriptsize{(32.8)} & 29.7 \scriptsize{(12.8)} & 18.4 \scriptsize{(8.2)} & 214.9 \scriptsize{(81.5)} \\
	POOL & 12.5 \scriptsize{(3.1)} & 10.6 \scriptsize{(2.8)} & 8.9 \scriptsize{(2.3)} & 7.6 \scriptsize{(1.9)} & 39.6 \scriptsize{(10.1)} \\
	PPOOL & 12.4 \scriptsize{(2.6)} & 10.4 \scriptsize{(2.3)} & 8.8 \scriptsize{(1.8)} & 7.5 \scriptsize{(1.6)} & 39.1 \scriptsize{(8.2)} \\
	Ell-RSCM & 1.1 \scriptsize{(0.4)} & 1.7 \scriptsize{(0.2)} & 2.6 \scriptsize{(0.1)} & 3.3 \scriptsize{(0.1)} & 8.6 \scriptsize{(0.5)} \\
	LOOCV1 & 10.6 \scriptsize{(20.9)} & 9.6 \scriptsize{(19.5)} & 8.8 \scriptsize{(6.9)} & 8.3 \scriptsize{(5.6)} & 37.3 \scriptsize{(28.6)} \\
	LOOCV2 & 1.1 \scriptsize{(0.6)} & 1.4 \scriptsize{(0.2)} & 2.1 \scriptsize{(0.1)} & 3.0 \scriptsize{(0.1)} & 7.6 \scriptsize{(0.7)} \\
	LIN1 & 5.3 \scriptsize{(0.5)} & 4.8 \scriptsize{(0.3)} & 4.6 \scriptsize{(0.3)} & 4.7 \scriptsize{(0.3)} & 19.4 \scriptsize{(1.1)} \\
	LIN2 & 0.9 \scriptsize{(0.5)} & 1.3 \scriptsize{(0.2)} & 2.1 \scriptsize{(0.1)} & 3.0 \scriptsize{(0.1)} & 7.3 \scriptsize{(0.6)} \\
	POLY & 0.9 \scriptsize{(0.3)} & 1.3 \scriptsize{(0.1)} & 2.1 \scriptsize{(0.1)} & 3.0 \scriptsize{(0.1)} & 7.2 \scriptsize{(0.5)} \\
	POLYs & 0.8 \scriptsize{(0.1)} & 1.3 \scriptsize{(0.1)} & 2.1 \scriptsize{(0.1)} & 3.0 \scriptsize{(0.1)} & 7.1 \scriptsize{(0.3)} \\
	POLY-Ave & 1.0 \scriptsize{(0.5)} & 1.4 \scriptsize{(0.3)} & 2.1 \scriptsize{(0.1)} & 3.1 \scriptsize{(0.1)} & 7.7 \scriptsize{(0.7)} \\
	POLYs-Ave & 1.0 \scriptsize{(0.3)} & 1.4 \scriptsize{(0.2)} & 2.1 \scriptsize{(0.1)} & 3.1 \scriptsize{(0.1)} & 7.6 \scriptsize{(0.5)} \\
	    \midrule
	    \multicolumn{6}{l}{\underline{\emph{Setup B. Compound symmetry}}} \\
	SCM & 14.7 \scriptsize{(11.4)} & 3.6 \scriptsize{(3.3)} & 1.5 \scriptsize{(1.0)} & 0.8 \scriptsize{(0.8)} & 20.6 \scriptsize{(11.9)} \\
	POOL & 10.5 \scriptsize{(4.9)} & 1.8 \scriptsize{(1.3)} & 0.5 \scriptsize{(0.3)} & 0.7 \scriptsize{(0.4)} & 13.5 \scriptsize{(6.1)} \\
	PPOOL & 6.6 \scriptsize{(3.4)} & 1.4 \scriptsize{(1.0)} & 0.5 \scriptsize{(0.3)} & 0.6 \scriptsize{(0.4)} & 9.0 \scriptsize{(4.0)} \\
	Ell-RSCM & 5.7 \scriptsize{(1.3)} & 2.9 \scriptsize{(1.1)} & 1.4 \scriptsize{(0.7)} & 0.7 \scriptsize{(0.5)} & 10.7 \scriptsize{(1.9)} \\
	LOOCV1 & 1.7 \scriptsize{(4.0)} & 1.0 \scriptsize{(1.8)} & 0.7 \scriptsize{(0.8)} & 0.7 \scriptsize{(0.7)} & 4.1 \scriptsize{(4.5)} \\
	LOOCV2 & 1.5 \scriptsize{(3.0)} & 0.9 \scriptsize{(1.2)} & 0.8 \scriptsize{(0.7)} & 0.7 \scriptsize{(0.6)} & 3.8 \scriptsize{(3.3)} \\
	LIN1 & 1.6 \scriptsize{(2.2)} & 0.8 \scriptsize{(0.9)} & 0.6 \scriptsize{(0.5)} & 0.6 \scriptsize{(0.4)} & 3.6 \scriptsize{(2.4)} \\
	LIN2 & 1.3 \scriptsize{(2.1)} & 0.7 \scriptsize{(0.9)} & 0.6 \scriptsize{(0.5)} & 0.6 \scriptsize{(0.4)} & 3.2 \scriptsize{(2.4)} \\
	POLY & 1.3 \scriptsize{(1.7)} & 0.7 \scriptsize{(0.7)} & 0.6 \scriptsize{(0.4)} & 0.6 \scriptsize{(0.4)} & 3.2 \scriptsize{(2.0)} \\
	POLYs & 1.3 \scriptsize{(1.7)} & 0.7 \scriptsize{(0.7)} & 0.6 \scriptsize{(0.4)} & 0.6 \scriptsize{(0.4)} & 3.1 \scriptsize{(2.0)} \\
	POLY-Ave & 3.3 \scriptsize{(2.2)} & 0.5 \scriptsize{(0.4)} & 0.8 \scriptsize{(0.4)} & 1.4 \scriptsize{(0.5)} & 6.0 \scriptsize{(1.9)} \\
	POLYs-Ave & 3.3 \scriptsize{(2.2)} & 0.5 \scriptsize{(0.4)} & 0.8 \scriptsize{(0.4)} & 1.4 \scriptsize{(0.5)} & 6.0 \scriptsize{(1.9)} \\
	    \midrule
	    \multicolumn{6}{l}{\underline{\emph{Setup C. Mixed}}} \\
	SCM & 12.1 \scriptsize{(1.9)} & 14.7 \scriptsize{(6.1)} & 8.6 \scriptsize{(1.5)} & 10.3 \scriptsize{(4.1)} & 45.6 \scriptsize{(7.7)} \\
	POOL & 6.3 \scriptsize{(0.9)} & 6.3 \scriptsize{(0.9)} & 4.5 \scriptsize{(0.5)} & 4.5 \scriptsize{(0.5)} & 21.5 \scriptsize{(2.0)} \\
	PPOOL & 5.4 \scriptsize{(0.5)} & 5.7 \scriptsize{(0.7)} & 3.8 \scriptsize{(0.4)} & 4.0 \scriptsize{(0.5)} & 19.0 \scriptsize{(1.5)} \\
	Ell-RSCM & 3.7 \scriptsize{(0.1)} & 3.9 \scriptsize{(0.2)} & 3.9 \scriptsize{(0.5)} & 4.1 \scriptsize{(0.6)} & 15.5 \scriptsize{(0.8)} \\
	LOOCV1 & 4.9 \scriptsize{(0.6)} & 5.6 \scriptsize{(2.4)} & 4.5 \scriptsize{(0.5)} & 5.0 \scriptsize{(1.8)} & 20.0 \scriptsize{(2.9)} \\
	LOOCV2 & 2.9 \scriptsize{(0.1)} & 2.9 \scriptsize{(0.1)} & 2.8 \scriptsize{(0.4)} & 2.8 \scriptsize{(0.4)} & 11.4 \scriptsize{(0.8)} \\
	LIN1 & 3.6 \scriptsize{(0.2)} & 3.6 \scriptsize{(0.2)} & 3.0 \scriptsize{(0.4)} & 3.1 \scriptsize{(0.4)} & 13.4 \scriptsize{(0.8)} \\
	LIN2 & 2.9 \scriptsize{(0.1)} & 2.9 \scriptsize{(0.2)} & 2.8 \scriptsize{(0.4)} & 2.8 \scriptsize{(0.4)} & 11.4 \scriptsize{(0.8)} \\
	POLY & 3.3 \scriptsize{(0.1)} & 3.4 \scriptsize{(0.2)} & 3.4 \scriptsize{(0.5)} & 3.5 \scriptsize{(0.5)} & 13.7 \scriptsize{(0.8)} \\
	POLYs & 3.3 \scriptsize{(0.1)} & 3.4 \scriptsize{(0.2)} & 3.4 \scriptsize{(0.5)} & 3.5 \scriptsize{(0.5)} & 13.7 \scriptsize{(0.8)} \\
	POLY-Ave & 3.3 \scriptsize{(0.1)} & 3.5 \scriptsize{(0.4)} & 3.4 \scriptsize{(0.4)} & 3.6 \scriptsize{(0.5)} & 13.9 \scriptsize{(0.8)} \\
	POLYs-Ave & 3.3 \scriptsize{(0.1)} & 3.5 \scriptsize{(0.4)} & 3.4 \scriptsize{(0.4)} & 3.6 \scriptsize{(0.5)} & 13.9 \scriptsize{(0.8)} \\
	    \midrule
	    \multicolumn{6}{l}{\underline{\emph{Setup D. Randomized}}} \\
	SCM & 20.6 \scriptsize{(72.9)} & 19.2 \scriptsize{(39.7)} & 19.6 \scriptsize{(48.7)} & 23.0 \scriptsize{(115.1)} & 82.4 \scriptsize{(149.3)} \\
	POOL & 41.7 \scriptsize{(110.9)} & 37.9 \scriptsize{(102.3)} & 38.0 \scriptsize{(86.3)} & 41.5 \scriptsize{(88.0)} & 159.1 \scriptsize{(254.9)} \\
	PPOOL & 10.1 \scriptsize{(16.0)} & 9.9 \scriptsize{(17.5)} & 9.9 \scriptsize{(17.6)} & 10.7 \scriptsize{(23.6)} & 40.6 \scriptsize{(37.1)} \\
	Ell-RSCM & 2.1 \scriptsize{(1.8)} & 2.1 \scriptsize{(1.7)} & 2.1 \scriptsize{(1.8)} & 2.1 \scriptsize{(3.6)} & 8.4 \scriptsize{(4.8)} \\
	LOOCV1 & 3.6 \scriptsize{(39.5)} & 2.7 \scriptsize{(4.7)} & 3.0 \scriptsize{(13.2)} & 3.2 \scriptsize{(16.2)} & 12.5 \scriptsize{(45.2)} \\
	LOOCV2 & 1.7 \scriptsize{(9.7)} & 1.5 \scriptsize{(1.7)} & 1.5 \scriptsize{(1.7)} & 1.6 \scriptsize{(2.6)} & 6.3 \scriptsize{(10.4)} \\
	LIN1 & 3.0 \scriptsize{(3.0)} & 2.9 \scriptsize{(2.3)} & 3.0 \scriptsize{(2.5)} & 3.1 \scriptsize{(4.1)} & 12.0 \scriptsize{(6.0)} \\
	LIN2 & 1.5 \scriptsize{(2.3)} & 1.4 \scriptsize{(1.4)} & 1.5 \scriptsize{(1.7)} & 1.5 \scriptsize{(3.6)} & 6.0 \scriptsize{(4.9)} \\
	POLY & 1.7 \scriptsize{(1.6)} & 1.6 \scriptsize{(1.5)} & 1.7 \scriptsize{(1.6)} & 1.7 \scriptsize{(3.5)} & 6.6 \scriptsize{(4.7)} \\
	POLYs & 1.7 \scriptsize{(1.6)} & 1.6 \scriptsize{(1.5)} & 1.7 \scriptsize{(1.6)} & 1.7 \scriptsize{(3.6)} & 6.6 \scriptsize{(4.9)} \\
	POLY-Ave & 6.2 \scriptsize{(16.3)} & 5.4 \scriptsize{(8.7)} & 5.8 \scriptsize{(10.8)} & 6.2 \scriptsize{(12.1)} & 23.5 \scriptsize{(22.6)} \\
	POLYs-Ave & 6.2 \scriptsize{(15.4)} & 5.4 \scriptsize{(8.7)} & 5.7 \scriptsize{(11.1)} & 6.3 \scriptsize{(24.4)} & 23.6 \scriptsize{(33.6)} \\
	    \bottomrule
	\end{tabular}
    \end{scriptsize}
\end{table}

\pgfplotsset{KappaEtaGammaStyle/.style={boxplot/draw direction=x,
ytick={1,2,3,4},
yticklabels={\makebox[2em]{$1$},\makebox[2em]{$2$},\makebox[2em]{$3$},\makebox[2em]{$4$}},
x tick label style={font=\small},
y tick label style={font=\small},
width=9cm,
height=3.0cm,
mark size=1,
cycle list={{blue,mark=+},
{blue,mark=o},
{red,mark=triangle},
{red,mark=square},
},
}}
\pgfplotsset{trCkCjStyle/.style={boxplot/draw direction=x,
ytick={1,2,3,4,5,6,7,8,9,10},
yticklabels={
    \makebox[2em]{$1 \& 1$}, 
    \makebox[2em]{$1 \& 2$}, 
    \makebox[2em]{$1 \& 3$},
    \makebox[2em]{$1 \& 4$},
    \makebox[2em]{$2 \& 2$},
    \makebox[2em]{$2 \& 3$},
    \makebox[2em]{$2 \& 4$},
    \makebox[2em]{$3 \& 3$},
    \makebox[2em]{$3 \& 4$},
    \makebox[2em]{$4 \& 4$}},
x tick label style={font=\small},
y tick label style={font=\small},
width=9cm,
height=5.0cm,
mark size=1,
cycle list={{blue,mark=+},
{gray,mark=o},
{gray,mark=o},
{gray,mark=o},
{blue,mark=o},
{gray,mark=o},
{gray,mark=o},
{red,mark=triangle},
{gray,mark=o},
{red,mark=square},
}}}
\usetikzlibrary{calc}
\begin{figure}[!ht]
    \centering
    \begin{tikzpicture}
	\begin{axis}[name=ploteta,KappaEtaGammaStyle,title={Estimates of $\eta_k$}]
	    \input{results/MIXED-ETA-addplot.tex}%
	\end{axis}
	\begin{axis}[name=plotkappa,at={($(ploteta.south)-(0,1.5cm)$)},anchor=north,KappaEtaGammaStyle,
	    title={Estimates of $\kappa_k$}]
	    \input{results/MIXED-KAPPA-addplot.tex}%
	\end{axis}
	\begin{axis}[name=plotgamma12,at={($(plotkappa.south)+(-2.0cm,-1.5cm)$)},anchor=north,
	    KappaEtaGammaStyle, title={Estimates of $\gamma_k$, $k=1,2$},
	    yticklabels={$1$,$2$},width=5cm]
	    \input{results/MIXED-GAMMA12-addplot.tex}%
	\end{axis}
	\begin{axis}[name=plotgamma34,at={($(plotkappa.south)+(2.0cm,-1.5cm)$)},anchor=north,
	    KappaEtaGammaStyle, title={Estimates of $\gamma_k$, $k=3,4$},
	    yticklabels={$3$,$4$},width=5cm,
	    cycle list={ {red,mark=triangle}, {red,mark=square},
	    },
	    ]
	    \input{results/MIXED-GAMMA34-addplot.tex}%
	\end{axis}
	\begin{axis}[name=plotcrossterms,at={($(plotkappa.south)-(0,4.5cm)$)},anchor=north,trCkCjStyle,
	    title={Estimates of $\ip{\M_i}{\M_j}$}]
	    \input{results/MIXED-trCkCj-addplot.tex}%
	\end{axis}
    \end{tikzpicture}
    \caption{\small Boxplots of the estimates of the parameters $\eta_k$,
    $\kappa_k$, $\gamma_k$, and the inner products $\ip{\M_i}{\M_j}$ for setup
    C. The black triangles ($\blacktriangle$) denote the true values.}
    \label{fig:etakappagamma}
\end{figure}

We evaluated the empirical NMSE performance of the proposed methods as well as
the accuracy of the plug-in estimates $\hat \eta_k$, $\hat \kappa_k$, $\hat
\gamma_k$, and the estimates of the inner products $\ip{\M_i}{\M_j}$. The
results were averaged over 4000 Monte Carlo trials. We simulated four
different setups: A, B, C, and D. In each setup, we generated $K=4$ classes,
each of dimension $p=200$. The data was generated from a multivariate
Student's $t_\nu$-distribution with various degrees of freedom $\nu$ and the
means of the classes were generated from the standard normal distribution
$\mathcal N_p(\bo,\I)$ and held fixed over the Monte Carlo trials for setups
A, B, and C. Regarding setup D, the mean as well as the other parameters were
randomly generated again for each Monte Carlo trial. The four different setups
were as follows.

\renewcommand{\labelenumi}{\emph{\Alph{enumi}}.}
\begin{enumerate}
    \item \emph{AR(1) process}
	\begin{itemize}
	    \item $n_1=25$, $n_2=50$, $n_3=75$, and $n_4 = 100$.
	    \item $\nu=8$ for all classes.
	    \item $(\M_k)_{ij} = \varrho_k^{|i-j|}$, where $\varrho_1 = 0.2$,
		$\varrho_2 = 0.3$, $\varrho_3 = 0.4$, and $\varrho_4 = 0.5$.
	\end{itemize}
    \item \emph{Compound symmetry}
	\begin{itemize}
	    \item $n_1=25$, $n_2=50$, $n_3=75$, and $n_4 = 100$.
	    \item $\nu=8$ for all classes.
	    \item $(\M_k)_{ii} = 1$ and $(\M_k)_{ij} = \varrho_k$ for $i\neq
		j$, where $\varrho_1=0.2$, $\varrho_2=0.3$, $\varrho_3=0.4$,
		and $\varrho_4=0.5$. 
	\end{itemize}
    \item \emph{Mixed structures}
	\begin{itemize}
	    \item $n_k = 100$, for all $k=1,2,3,4$.
	    \item $\nu_1=12$, $\nu_2 = 8$, $\nu_3 = 12$, and $\nu_4 = 8$.
	    \item $\M_1$ and $\M_2$ are as in setup A with
		$\varrho_1=\varrho_2=0.6$ and $\M_3$ and $\M_4$ are as in
		setup B with $\varrho_3=\varrho_4=0.1$.
	\end{itemize}
    \item \emph{Randomized}
	\\
	For each $k=1,2,3,4$ and each Monte Carlo trial:
	\begin{itemize}
	    \item $n_k = \text{Unif}\{10,200\}$.
	    \item $\nu_k = \text{Unif}\{5,12\}$.
	    \item $\bmu_k = \mathcal N_p(\bo,\I)$.
	    \item $\M_k$ either AR(1) or CS with equal probability and
		$\varrho_k = \text{Unif}(0,0.9)$.
	\end{itemize}
\end{enumerate}

In addition to the proposed estimators, in the simulations we include the
sample covariance matrix (SCM), the pooled SCM~\eqref{eq:pooledS} (POOL), the
partially pooled estimator~\eqref{eq:partiallypooledSCM} using the method
from~\cite{Raninen2018} (PPOOL) for choosing $\beta_k$, the method
from~\cite{Ollila2019} (ELL1), which is designed for single class covariance
matrix estimation and uses a convex combination $\beta_k \S_k +
(1-\beta_k)\hat\eta_k \I$. We also include the estimators proposed
in~\cite{raninen2020linear} (LIN1 and LIN2), which use a nonnegative linear
combination of the class SCMs. LIN2 differs from LIN1 in that it incorporates
additional shrinkage toward the identity matrix. In addition, we include the
shrinkage covariance matrix estimator proposed in~\cite{Tong2018} (LOOCV1 and
LOOCV2). The method is based on estimating the covariance matrix via a
nonnegative linear combination of the SCM and target matrices using
(low-complexity) leave-one-out cross-validation. In the simulations, for each
class $k$, LOOCV1 uses the pooled SCM and the identity matrix as target
matrices for regularizing the SCM of class $k$. Respectively, for each class
$k$, LOOCV2 uses the individual SCMs of the other classes as well as the
identity matrix as target matrices for regularizing the SCM of class $k$.

The empirical NMSE, $\textrm{Ave} \|\hat\covm_k -
\covm_k\|_{\Fro}^2/\norm{\covm_k}^2_{\Fro}$, is reported for each class in
Table~\ref{table:NMSE}. As can be noted, in the setups A, B, and C, the
proposed methods: POLY, POLYs, POLY-Ave, and POLYs-Ave yielded significantly
lower NMSE than the methods: SCM, POOL, PPOOL, ELL1, LIN1, and LOOCV1. The
only methods, which performed comparably well to POLY and POLYs were LIN2 and
LOOCV2. In setup A (AR(1) case), the best performing methods were POLY, POLYs,
and LIN2, which all had a similar NMSE. In setup B (CS case), the methods
POLY, POLYs performed best (LIN2 and LOOCV2 had a slightly higher NMSE or
higher standard deviation). In the setup C (mixed case), LIN2 and LOOCV2 had
similar performance and a slightly lower NMSE than the proposed methods. In
setup D (randomized case), LIN2 and LOOCV2 had slightly lower NMSE than POLY
and POLYs. However, LOOCV2 had over two times the standard deviation compared
to LIN2, POLY, and POLYs. In summary, when the covariance matrices had a
similar structure (setup A and setup B), the proposed POLY and POLYs methods
performed best, whereas when the covariance matrices had a different structure
LOOCV2 and LIN2 had a slight edge. The averaged versions of the proposed
methods, POLY-Ave and POLYs-Ave, performed well, when the covariance matrices
had similar structure (setup A and setup B), whereas in setup C and setup D,
their performance degraded compared to the methods using class-specific tuning
parameters. Lastly, the performances of POLY and POLYs were almost identical as
well as the performances of POLY-Ave and POLYs-Ave.

Figure~\ref{fig:etakappagamma} shows the boxplots of the estimated statistical
parameters $\eta_k$, $\kappa_k$, $\gamma_k$, and the inner products
$\ip{\M_i}{\M_j}$ of setup C, where classes 1 and 2 have an AR(1) covariance
structure and classes 3 and 4 have a CS structure. The median of the
estimated scales $\eta_k$ coincide with the true values (shown as black
triangles ($\blacktriangle$) in the plot). The elliptical kurtosis $\kappa_k$
was more difficult to estimate for heavier tailed distributions (classes 2 and
4 with $\nu=8$ degrees of freedom) than lighter tailed distributions (classes
1 and 3 with $\nu=12$ degrees of freedom). The sphericity $\gamma_k$ was well
estimated for classes 1 and 2, which both had the AR(1) covariance structure.
Regarding classes 3 and 4, which had the CS covariance structure, the
sphericity estimates were noticeably biased. This was most likely due to the
sphericity estimate not being asymptotically unbiased for the covariance matrix
with CS structure~\cite{raninen2020linear}. The inner products were well
estimated except for those pairs for which both covariance matrices had a CS
structure, i.e., $\ip{\M_i}{\M_j}$, for $(i,j) = \{(3,3),(3,4),(4,4)\}$. 

The estimated tuning parameters are shown in Figure~\ref{fig:mixednmse} for the
setup C. The figure depicts the theoretical NMSE of the estimator as a function
of the tuning parameters. The optimal tuning parameter pair
$(\alpha^\star,\beta^\star)$ is denoted by the black triangle
($\blacktriangle$). The small gray dots correspond to the estimated tuning
parameters. The first 400 estimates from the 4000 Monte Carlo trials are shown.
The mean of the estimated tuning parameters, denoted by the blue square
$({\color{blue} \blacksquare})$, was very close to the optimal value, although a
slight bias could be observed. The estimated tuning parameters were clustered
tightly around the optimal point, especially for classes 1 and 2 whose
covariance matrices had an AR(1) structure. Regarding classes 3 and 4, whose
covariance matrices had the CS structure, there was slightly more spread in the
estimates.

\subsection{Discriminant analysis}\label{sec:RDAsimulations}

We evaluated the performance of the proposed method \textbf{POLY-Ave} in
discriminant analysis classification. The implementation was written in
R~(programming language) using three different real data
sets: Sonar, Vowel, and Ionosphere, which were obtained
from~\cite{UCIMLrepository} via the R package
\emph{mlbench}~\cite{mlbenchpackage}. The features in the data sets, which
were constant for some of the classes were removed. Specifically, from the
Vowel data set, we removed the first feature, and from the Ionosphere data
set, we removed the first two features. The final specifications of the data
sets are given in Table~\ref{tab:datasets}. In our implementation, we used the
package \emph{SpatialNP}~\cite{SpatialNPpackage} for computing the SSCM more
efficiently, and the package \emph{tictoc}~\cite{tictocpackage} for recording
the computation times of the methods. 

In the simulations, we randomly selected a proportion of the samples as training
data for estimating the sample means and RSCMs. The remaining data was used as
test data to estimate the classification accuracy using the classification rule
given in~\eqref{eq:DArule}. The training set size is given as the number of
training samples divided by the total number of samples in the data set. So, the
training set size is given as a number between zero and one, where, e.g., $0.3$
corresponds to the case where $30 \%$ of the data is used as training data and
$70 \%$ of the data is used as test data to estimate the classification
accuracy. The results were averaged over 10 independent Monte Carlo repetitions
for each training set size.

For comparison to our proposed method, we included two different methods for
implementing cross-validation. The first method chooses the tuning parameters
from a two-dimensional grid of candidate parameter values as the minimizer of
the 5-fold or 10-fold cross-validated misclassification rate. In 10-fold
cross-validation the two-dimensional grid of tuning parameters scans all the
pairs in the set $\{0,0.125,0.25,0.375,0.5,0.625,0.75,0.875,1\} \ni
\alpha,\beta$. Correspondingly, in 5-fold cross-validation the set is
$\{0,0.25,0.5,0.75,1\} \ni \alpha,\beta$. This method is denoted by
\textbf{5-CV} and \textbf{10-CV} and was implemented using the packages:
\emph{caret}~\cite{kuhnCaretClassificationRegression2020} and
\emph{KlaR}~\cite{klaRpackage,roeverKlaRClassificationVisualization2020}. The
second cross-validation method is also based on selecting the tuning parameters
that minimize the cross-validated misclassification rate. However, instead of
using a predefined two-dimensional grid, the second method uses a
Nelder-Mead-algorithm included in the R package
\emph{KlaR}~\cite{klaRpackage,roeverKlaRClassificationVisualization2020}. This
method was implemented using both 5-fold and 10-fold cross-validation and is
denoted by \textbf{5-CV-NM} and \textbf{10-CV-NM}, respectively.

Figure~\ref{fig:classification} depicts the mean classification accuracy on
the test set as a function of the training set size as well as the median
computation time. It can be seen that all of the tested methods gave comparable
classification performance. Depending on the data set and training set size,
there were only minor differences. The computation time was, however,
substantially longer for the cross-validation based methods and it tended to
increase along with the training data size faster than for the proposed method.
The chosen tuning parameter values are shown in
Figure~\ref{fig:tuningparameters}. It can be seen that for the cross-validation
based methods the tuning parameter values varied a lot with different training
set sizes. The tuning parameters remained much more stable for our proposed
method.

\begin{table}
    \centering
    \caption{\small Specifications of the data sets.}\label{tab:datasets}
    \begin{tabular}{l | l l l}
	\toprule
	Data set & dimension $p$ & classes $K$ & samples $n_k$ \\
	\midrule
	Sonar & 59 & 2 & $n_1=111$, $n_2 = 97$ \\
	\midrule
	Vowel & 9 & 11 & $n_k = 90$ for all $k$ \\
	\midrule
	Ionosphere & 32 & 2 & $n_1=126$, $n_2 = 225$ \\
	\bottomrule
    \end{tabular}
\end{table}

\pgfplotstableread{results/classification_results_ionosphere_no_preprocessing.dat}\ionosphereresults
\pgfplotstableread{results/classification_results_sonar_no_preprocessing.dat}\sonarresults
\pgfplotstableread{results/classification_results_vowel_no_preprocessing.dat}\vowelresults
\begin{figure}[!ht]
    \begin{tikzpicture}
	\begin{axis}[name = rdaplot,
	    width = \linewidth,
	    title = {Classification accuracy},
	    xlabel = {training set size},
	    ylabel = {mean classification accuracy},
	    legend style={font=\tiny, draw=none},
	    legend pos = south west,
	    legend columns = 3,
	    ymin = 0.2,
	    ]
	    \addplot[red,mark=o,dashed] table[x=training_ratio,
	    y=rda.caret.cv5.acc.mean] {\sonarresults};
	    \addplot[blue,mark=o,dashed] table[x=training_ratio,
	    y=rda.caret.cv5.acc.mean] {\vowelresults};
	    \addplot[black,mark=o,dashed] table[x=training_ratio,
	    y=rda.caret.cv5.acc.mean] {\ionosphereresults};

	    \addplot[red,mark=square,dashed] table[x=training_ratio,
	    y=rda.caret.cv10.acc.mean] {\sonarresults};
	    \addplot[blue,mark=square,dashed] table[x=training_ratio,
	    y=rda.caret.cv10.acc.mean] {\vowelresults};
	    \addplot[black,mark=square,dashed] table[x=training_ratio,
	    y=rda.caret.cv10.acc.mean] {\ionosphereresults};

	    \addplot[red,mark=star] table[x=training_ratio,
	    y=rda.klaR.cv5.acc.mean] {\sonarresults};
	    \addplot[blue,mark=star] table[x=training_ratio,
	    y=rda.klaR.cv5.acc.mean] {\vowelresults};
	    \addplot[black,mark=star] table[x=training_ratio,
	    y=rda.klaR.cv5.acc.mean] {\ionosphereresults};

	    \addplot[red,mark=triangle] table[x=training_ratio,
	    y=rda.klaR.cv10.acc.mean] {\sonarresults};
	    \addplot[blue,mark=triangle] table[x=training_ratio,
	    y=rda.klaR.cv10.acc.mean] {\vowelresults};
	    \addplot[black,mark=triangle] table[x=training_ratio,
	    y=rda.klaR.cv10.acc.mean] {\ionosphereresults};

	    \addplot[red,mark=triangle*] table[x=training_ratio,
	    y=rda.pool.acc.mean] {\sonarresults};
	    \addplot[blue,mark=triangle*] table[x=training_ratio,
	    y=rda.pool.acc.mean] {\vowelresults};
	    \addplot[black,mark=triangle*] table[x=training_ratio,
	    y=rda.pool.acc.mean] {\ionosphereresults};
	    \legend{
		Sonar: 5-CV,
		Vowel: 5-CV,
		Ionosphere: 5-CV,
		Sonar: 10-CV,
		Vowel: 10-CV,
		Ionosphere: 10-CV,
		Sonar: 5-CV-NM,
		Vowel: 5-CV-NM,
		Ionosphere: 5-CV-NM,
		Sonar: 10-CV-NM,
		Vowel: 10-CV-NM,
		Ionosphere: 10-CV-NM,
		Sonar: POLY-Ave,
		Vowel: POLY-Ave,
		Ionosphere: POLY-Ave
		}
	\end{axis}
	\begin{semilogyaxis}[name = computationtime,
	    title = {Computation time},
	    at={($(rdaplot.south)-(0cm,2cm)$)},anchor=north,
	    width = \linewidth,
	    xlabel = {training set size},
	    ylabel = {median computation time (sec)},
	    legend style = {font=\tiny, draw=none},
	    legend columns = 3,
	    ymax = 4000,
	    ]
	    \addplot[red,mark=o] table[x=training_ratio,
	    y=rda.caret.cv5.time.median] {\sonarresults}; 
	    \addplot[red,mark=square] table[x=training_ratio,
	    y=rda.caret.cv5.time.median] {\vowelresults};
	    \addplot[red,mark=triangle*] table[x=training_ratio,
	    y=rda.caret.cv5.time.median] {\ionosphereresults};

	    \addplot[blue,mark=o] table[x=training_ratio,
	    y=rda.caret.cv10.time.median] {\sonarresults};
	    \addplot[blue,mark=square] table[x=training_ratio,
	    y=rda.caret.cv10.time.median] {\vowelresults};
	    \addplot[blue,mark=triangle*] table[x=training_ratio,
	    y=rda.caret.cv10.time.median] {\ionosphereresults};

	    \addplot[cyan,mark=o] table[x=training_ratio,
	    y=rda.klaR.cv5.time.median] {\sonarresults}; 
	    \addplot[cyan,mark=square] table[x=training_ratio,
	    y=rda.klaR.cv5.time.median] {\vowelresults}; 
	    \addplot[cyan,mark=triangle*] table[x=training_ratio,
	    y=rda.klaR.cv5.time.median] {\ionosphereresults}; 

	    \addplot[violet,mark=o] table[x=training_ratio,
	    y=rda.klaR.cv10.time.median] {\sonarresults};
	    \addplot[violet,mark=square] table[x=training_ratio,
	    y=rda.klaR.cv10.time.median] {\vowelresults};
	    \addplot[violet,mark=triangle*] table[x=training_ratio,
	    y=rda.klaR.cv10.time.median] {\ionosphereresults};

	    \addplot[black,mark=o] table[x=training_ratio,
	    y=rda.pool.time.median] {\sonarresults};
	    \addplot[black,mark=square] table[x=training_ratio,
	    y=rda.pool.time.median] {\vowelresults};
	    \addplot[black,mark=triangle*] table[x=training_ratio,
	    y=rda.pool.time.median] {\ionosphereresults};
	    \legend{
		Sonar: 5-CV,
		Vowel: 5-CV,
		Ionosphere: 5-CV,
		Sonar: 10-CV,
		Vowel: 10-CV,
		Ionosphere: 10-CV,
		Sonar: 5-CV-NM,
		Vowel: 5-CV-NM,
		Ionosphere: 5-CV-NM,
		Sonar: 10-CV-NM,
		Vowel: 10-CV-NM,
		Ionosphere: 10-CV-NM,
		Sonar: POLY-Ave,
		Vowel: POLY-Ave,
		Ionosphere: POLY-Ave}
	\end{semilogyaxis}
    \end{tikzpicture}
    \caption{\small Classification accuracy and computation time of the
    different methods for choosing the tuning
    parameters.}\label{fig:classification}
\end{figure}
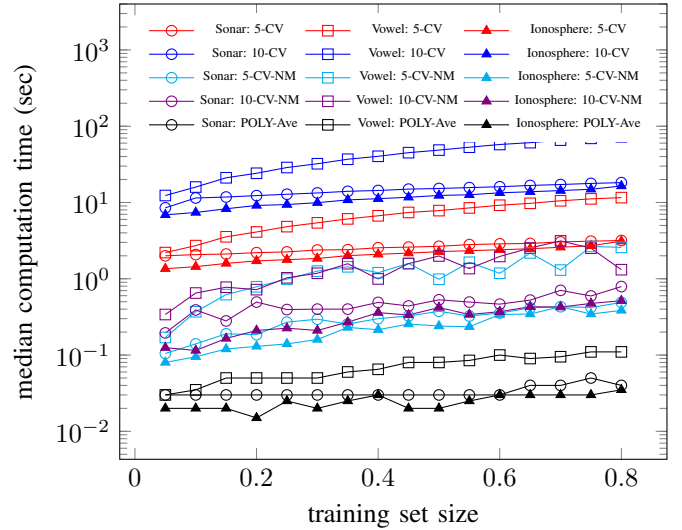

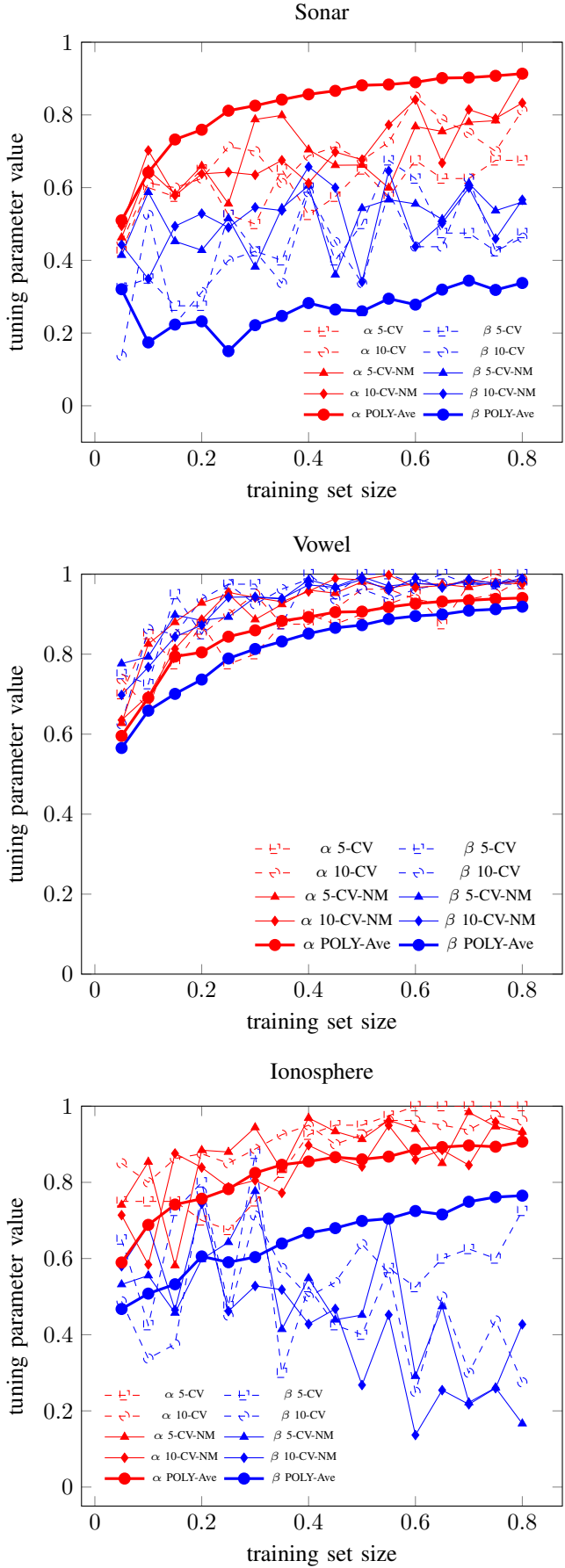
\begin{figure}
    \begin{tikzpicture}
	\begin{axis}[
		name = tuningsonar,
		width = \linewidth,
		xlabel = {training set size},
		ylabel = {tuning parameter value},
		legend pos = south east,
		legend style={font=\tiny, draw=none},
		title = Sonar,
		legend columns = 2,
		ymax = 1,
		ymin = -0.1,
		]
		\addplot[red,mark=square,dashed] table[x=training_ratio,
		y=alpha.caret.cv5.tl5] {\sonarresults};
		\addplot[blue,mark=square,dashed] table[x=training_ratio,
		y=beta.caret.cv5.tl5] {\sonarresults};

		\addplot[red,mark=o,dashed] table[x=training_ratio,
		y=alpha.caret.cv10.tl9] {\sonarresults};
		\addplot[blue,mark=o,dashed] table[x=training_ratio,
		y=beta.caret.cv10.tl9] {\sonarresults};

		\addplot[red,mark=triangle*] table[x=training_ratio,
		y=alpha.klaR.cv5.NelderMead] {\sonarresults};
		\addplot[blue,mark=triangle*] table[x=training_ratio,
		y=beta.klaR.cv5.NelderMead] {\sonarresults};

		\addplot[red,mark=diamond*] table[x=training_ratio,
		y=alpha.klaR.cv10.NelderMead] {\sonarresults};
		\addplot[blue,mark=diamond*] table[x=training_ratio,
		y=beta.klaR.cv10.NelderMead] {\sonarresults};

		\addplot[red,mark=*,very thick] table[x=training_ratio,
		y=alpha.poolave] {\sonarresults};
		\addplot[blue,mark=*,very thick] table[x=training_ratio,
		y=beta.poolave] {\sonarresults};
		\legend{
		    $\alpha$ 5-CV,
		    $\beta$ 5-CV,
		    $\alpha$ 10-CV,
		    $\beta$ 10-CV,
		    $\alpha$ 5-CV-NM,
		    $\beta$ 5-CV-NM,
		    $\alpha$ 10-CV-NM,
		    $\beta$ 10-CV-NM,
		    $\alpha$ POLY-Ave,
		    $\beta$ POLY-Ave}
	\end{axis}
	\begin{axis}[
		name = tuningvowel,
		at={($(tuningsonar.south)-(0cm,2cm)$)},
		anchor=north,
		width = \linewidth,
		xlabel = {training set size},
		ylabel = {tuning parameter value},
		legend pos = south east,
		legend style={font=\scriptsize, draw=none},
		title = Vowel,
		legend columns = 2,
		ymax = 1,
		ymin = 0,
		]

		\addplot[red,mark=square,dashed] table[x=training_ratio,
		y=alpha.caret.cv5.tl5] {\vowelresults};
		\addplot[blue,mark=square,dashed] table[x=training_ratio,
		y=beta.caret.cv5.tl5] {\vowelresults};

		\addplot[red,mark=o,dashed] table[x=training_ratio,
		y=alpha.caret.cv10.tl9] {\vowelresults};
		\addplot[blue,mark=o,dashed] table[x=training_ratio,
		y=beta.caret.cv10.tl9] {\vowelresults};

		\addplot[red,mark=triangle*] table[x=training_ratio,
		y=alpha.klaR.cv5.NelderMead] {\vowelresults};
		\addplot[blue,mark=triangle*] table[x=training_ratio,
		y=beta.klaR.cv5.NelderMead] {\vowelresults};

		\addplot[red,mark=diamond*] table[x=training_ratio,
		y=alpha.klaR.cv10.NelderMead] {\vowelresults};
		\addplot[blue,mark=diamond*] table[x=training_ratio,
		y=beta.klaR.cv10.NelderMead] {\vowelresults};

		\addplot[red,mark=*,very thick] table[x=training_ratio,
		y=alpha.poolave] {\vowelresults};
		\addplot[blue,mark=*,very thick] table[x=training_ratio,
		y=beta.poolave] {\vowelresults};

		\legend{
		    $\alpha$ 5-CV,
		    $\beta$ 5-CV,
		    $\alpha$ 10-CV,
		    $\beta$ 10-CV,
		    $\alpha$ 5-CV-NM,
		    $\beta$ 5-CV-NM,
		    $\alpha$ 10-CV-NM,
		    $\beta$ 10-CV-NM,
		    $\alpha$ POLY-Ave,
		    $\beta$ POLY-Ave}
	\end{axis}
	\begin{axis}[
		name = tuningionosphere,
		at={($(tuningvowel.south)-(0cm,2cm)$)},anchor=north,
		width = \linewidth,
		xlabel = {training set size},
		ylabel = {tuning parameter value},
		legend pos = south west,
		legend style={font=\tiny, draw=none},
		title = Ionosphere,
		legend columns = 2,
		ymax = 1,
		ymin = -0.05,
		]

		\addplot[red,mark=square,dashed] table[x=training_ratio,
		y=alpha.caret.cv5.tl5] {\ionosphereresults};
		\addplot[blue,mark=square,dashed] table[x=training_ratio,
		y=beta.caret.cv5.tl5] {\ionosphereresults};

		\addplot[red,mark=o,dashed] table[x=training_ratio,
		y=alpha.caret.cv10.tl9] {\ionosphereresults};
		\addplot[blue,mark=o,dashed] table[x=training_ratio,
		y=beta.caret.cv10.tl9] {\ionosphereresults};

		\addplot[red,mark=triangle*] table[x=training_ratio,
		y=alpha.klaR.cv5.NelderMead] {\ionosphereresults};
		\addplot[blue,mark=triangle*] table[x=training_ratio,
		y=beta.klaR.cv5.NelderMead] {\ionosphereresults};

		\addplot[red,mark=diamond*] table[x=training_ratio,
		y=alpha.klaR.cv10.NelderMead] {\ionosphereresults};
		\addplot[blue,mark=diamond*] table[x=training_ratio,
		y=beta.klaR.cv10.NelderMead] {\ionosphereresults};

		\addplot[red,mark=*,very thick] table[x=training_ratio,
		y=alpha.poolave] {\ionosphereresults};
		\addplot[blue,mark=*,very thick] table[x=training_ratio,
		y=beta.poolave] {\ionosphereresults};

		\legend{
		    $\alpha$ 5-CV,
		    $\beta$ 5-CV,
		    $\alpha$ 10-CV,
		    $\beta$ 10-CV,
		    $\alpha$ 5-CV-NM,
		    $\beta$ 5-CV-NM,
		    $\alpha$ 10-CV-NM,
		    $\beta$ 10-CV-NM,
		    $\alpha$ POLY-Ave,
		    $\beta$ POLY-Ave}
	\end{axis}
    \end{tikzpicture}
    \caption{\small The estimated tuning parameters for the data
    sets.}\label{fig:tuningparameters}
\end{figure}

\section{Conclusions}\label{sec:conclusion}
Two regularized sample covariance matrix estimators specified in
\eqref{eq:RSCMRDA} and in \eqref{eq:RSCMstreamlined} for multiclass problems
were considered in this work and their theoretically optimal (in terms of MSE)
class-specific tuning parameters were derived in Theorem~\ref{thm:poly} and
Theorem~\ref{thm:polysimple}, respectively. Since the optimal tuning
parameters depend on unknown scalar parameters in \eqref{eq:unknown_params}, a
method for their estimation was proposed. The usefulness of the method was
supported by the conducted numerical simulations, which demonstrated very good
MSE performance when compared to similar methods.  By averaging the
class-specific tuning parameters, the method was applied for choosing the tuning
parameters in an RDA classification framework. The proposed approach had
comparable classification performance to cross-validation on each of the three
tested real data sets, but had significantly faster computation time. The codes
for the proposed methods in Matlab, R, and Python programming languages are
available at \url{https://github.com/EliasRaninen}.

\appendices
\section{The MSE and optimal tuning parameters}\label{app:MSE}
\subsection{Notation and useful identities}
Let $\A$ and $\B$ be symmetric positive semidefinite matrices of same size.
Using the notation
\begin{align*}
    \I_\A = \frac{\tr(\A)}{p} \I
    \quad \text{and} \quad
    \A^\I = \A - \I_\A,
\end{align*}
we have the following identities:
\begin{itemize}
    \item $\tr(\I_\A) = \tr(\A)$
    \item $\tr(\A^\I) = 0$
    \item $\I_{\A+\B} = \I_\A + \I_\B$
    \item $(\A + \B)^\I = \A^\I + \B^\I$
    \item $\ip{\I_\A}{\I_\B} = \ip{\I_\A}{\B} = \ip{\A}{\I_\B} =
	p^{-1}\tr(\A)\tr(\B)$
    \item $\ip{\A^\I}{\I_\B} = 0$
    \item $\ip{\A^\I}{\B^\I} = \ip{\A}{\B} - \ip{\I_\A}{\I_\B}$
    \item $\ip{\A^\I}{\cdot} = \textstyle \ip{\A}{\cdot} - \ip{\I_\A}{\cdot}$
    \item $\fn{\A^\I}^2 = \fn{\A}^2 - \fn{\I_\A}^2$.
\end{itemize}
Because of linearity, we also have $\E[\I_{\A}] = \I_{\E[\A]}$ and $\E[\A^\I] =
\E[\A]^\I$. Regarding expressions involving the pooled SCM $\S$, we have
\begin{itemize}
    \item
	$ \fn{\S}^2 
	= \sum_i \pi_i^2 \|\S_i\|_\Fro^2 + \sum_{i\neq j} \pi_i \pi_j
	\ip{\S_i}{\S_j}$
    \item
	$\fn{\I_\S}^2 
	= \sum_i \pi_i^2 \|\I_{\S_i}\|_\Fro^2 + \sum_{i\neq j} \pi_i \pi_j
	\ip{\I_{\S_i}}{\I_{\S_j}}$
    \item
	$\ip{\cdot}{\S} = \sum_{j} \pi_j \ip{\cdot}{\S_j}$
    \item
	$\ip{\cdot}{\I_\S} = \sum_{j} \pi_j \ip{\cdot}{\I_{\S_j}}$.
\end{itemize}

\subsection{MSE of the estimator}\label{app:MSEoftheestimator}
The estimator $\hat \M_k (\alpha, \beta)$ in~\eqref{eq:RSCMRDA} can be
rewritten as
\begin{align*}
    \hat \M_k (\alpha, \beta) 
    = \alpha \beta {(\S_k - \S)}^\I + \alpha \S^\I + \beta \I_{(\S_k - \S)} +
    \I_\S.
\end{align*}
The squared error $\| \hat \M_k(\alpha,\beta) - \M_k \|^2_\Fro$ is
\begin{align*}
    &\fn{
	\alpha \beta {(\S_k - \S)}^\I
	+ \alpha \S^\I 
	+ \beta \I_{(\S_k - \S)} + \I_\S
	- \M_k
	}^2
	\\
	&=
	\alpha^2 \beta^2 \tilde C_{22}
	+ \alpha^2 \beta \tilde C_{21}
	+ \alpha \beta^2 \tilde C_{12}
	+ \alpha^2 \tilde C_{20}
	+ \beta^2 \tilde C_{02}
	\\&\qquad
	+ \alpha \beta \tilde C_{11}
	+ \alpha \tilde C_{10}
	+ \beta \tilde C_{01}
	+ \tilde C_{00},
\end{align*}
where
\begin{equation*}
    \begin{array}{ll}
	\tilde C_{22} = \fn{\S_k^\I - \S^\I}^2
	&
	\tilde C_{21} = 2\ip{\S_k^\I-\S^\I}{\S^\I}
	\\
	\tilde C_{12} = 0
	&
	\tilde C_{20} = \fn{\S^\I}^2
	\\
	\tilde C_{02} = \fn{\I_{\S_k} - \I_\S}^2
	&
	\tilde C_{11} = - 2\ip{\S_k^\I - \S^\I}{\M_k}
	\\
	\tilde C_{10} = -2\ip{\S^\I}{\M_k}
	&
	\tilde C_{01} = 2\ip{\I_{\S_k}-\I_\S}{\I_\S - \M_k}
	\\
	\tilde C_{00} = \fn{\I_\S - \M_k}^2.
    \end{array}
\end{equation*}
Taking the expectation of the squared error gives the MSE, with the
coefficients $C_{mn} = \E[\tilde C_{mn}]$, where $m,n \in \{0,1,2\}$.

For example, 
$C_{20} = \E[\tilde C_{20}] = \E[\|\S^\I\|_\Fro^2]
= \E[\|\S\|_\Fro^2] - \E[\|\I_\S\|_\Fro^2]$, where 
$\E[\|\S\|_\Fro^2] = \sum_i \pi_i^2 \E[\|\S_i\|_\Fro^2] + \sum_{i \neq
j} \pi_i \pi_j \ip{\M_i}{\M_j}$ and
$\E[\|\I_\S\|_\Fro^2] = \sum_i \pi_i^2 \E[\|\I_{\S_i}\|_\Fro^2] + \sum_{i \neq
j} \pi_i \pi_j \ip{\I_{\M_i}}{\I_{\M_j}}$. The estimation of these terms is
explained in Section~\ref{sec:estimationofparameters}.

The optimal tuning parameters can be solved exactly as follows. Let $L_k =
\MSE(\hat\M_k)$. The gradient equations of the MSE can be written as
\begin{align*}
    \left\{
	\begin{array}{ll}
	    \partial_\alpha L_k &=
	    2\alpha \beta^2 C_{22} 
	    + 2\alpha \beta C_{21}
	    + 2\alpha C_{20} 
	    + \beta C_{11}
	    + C_{10}
	    \\
	    \partial_\beta L_k &=
	    2\alpha^2 \beta C_{22} 
	    + \alpha^2 C_{21}
	    + 2\beta C_{02}
	    + \alpha C_{11}
	    + C_{01}.
	\end{array}
	\right.
\end{align*}
Set $\partial_\alpha L_k = 0$. Then, solving for $\alpha$ gives
\begin{align}
    \alpha &= -\frac{1}{2}
    \frac{\beta C_{11} + C_{10}}{\beta^2 C_{22} + \beta C_{21} + C_{20}}
    \label{eq:alpha}
\end{align}
when $\beta^2 C_{22} + \beta C_{21} + C_{20} \neq 0$, which holds a.s. for
continuous distributions (see~Appendix~\ref{app:biconvex}). By
substituting~\eqref{eq:alpha} into the equation $\partial_\beta L_k=0$, after
some algebra, we obtain a rational function. The zeros of the rational function
can be computed from the roots of its numerator, which is a quintic (fifth
order) polynomial in $\beta$.
To this end, general polynomial solvers can be used. The critical points of the
MSE are obtained by substituting the roots into~\eqref{eq:alpha}. The optimal
tuning parameters then correspond to the critical point, which yields the
minimum estimated MSE. If this critical point is not in the feasible set $[0,1]
\times [0,1]$, the other critical points and the boundaries need to be
considered, i.e., the special cases C1--C4 (see
Section~\ref{sec:derivationalphabeta}). For this purpose the
equations~\eqref{eq:alphagivenbeta} and~\eqref{eq:betagivenalpha} of
Theorem~\ref{thm:biconvex} can be used.

In practise, it may be simpler to find an approximately optimal tuning parameter
pair by forming a two-dimensional grid of $(\alpha,\beta)$ and choosing the
point with the smallest estimated MSE as explained in
Subsection~\ref{sec:computationoftuningparameters}. The approximate solution
can then further be finetuned by iterating~\eqref{eq:alphagivenbeta}
and~\eqref{eq:betagivenalpha} (see Appendix~\ref{app:biconvex}).

\subsection{Alternate convex minimization}\label{app:biconvex}
By considering one of the tuning parameters $\alpha$ or $\beta$ fixed, the
optimization of the remaining one is a convex problem. This is known as
biconvexity. To prove this, we show that the second derivatives of the MSE
$L_k$ are positive (implying strict convexity). Indeed, we have
\begin{align*}
    &\left\{
	\begin{array}{ll} 
	    \partial_\alpha^2 L_k = 2\beta^2 C_{22} + 2\beta C_{21} + 2
	    C_{20} > 0
	    \\
	    \partial_\beta^2 L_k = 2\alpha^2 C_{22} + 2 C_{02} > 0.
	\end{array}
	\right.
\end{align*}
The first equation is an upward opening quadratic function in $\beta$. Hence, it
is positive if its corresponding discriminant function is negative (there are no
real roots). Indeed, the discriminant, $C_{21}^2 - 4 C_{22} C_{20}$, is
negative, i.e., 
\begin{align*}
    (2\E[\ip{\S_k^\I - \S^\I}{\S^\I}])^2
    - 4 \E[\|\S_k^\I - \S^\I\|_\Fro^2]\E[\|\S^\I\|_\Fro^2] < 0 ~\text{a.s.},
\end{align*}
which follows from the Cauchy-Schwartz inequality. The second equation is also
positive since $C_{22}$ and $C_{02}$ are a.s. both positive.

Having concluded that the problem is biconvex, the (unconstrained) solution for
$\beta$ given a fixed value of $\alpha$ is
\begin{align*}
    \beta = -\frac{1}{2}\frac{\alpha^2 C_{21} + \alpha C_{11} 
    + C_{01}}{\alpha^2 C_{22} + C_{02}}.
\end{align*}
The corresponding (unconstrained) solution for $\alpha$ given a fixed value of
$\beta$ was already given in~\eqref{eq:alpha}. To guarantee that $\alpha,\beta
\in [0,1]$, a projection to the feasible set by the clip function $[\cdot]_0^1$
has to be applied. Hence, we obtain~\eqref{eq:alphagivenbeta}
and~\eqref{eq:betagivenalpha}. 

Lastly, we make some remarks about the convergence of sequentially
iterating~\eqref{eq:alphagivenbeta} and~\eqref{eq:betagivenalpha}. Let
$\alpha^{(i)}$ and $\beta^{(i)}$ denote the tuning parameter values at the $i$th
iteration. Denote the MSE function to be minimized as $L_k : X \times Y \to
\real$, where $X = [0,1]$ and $Y = [0,1]$. Since the MSE $L_k$ is bounded from
below and the sequence $\{L_k(\alpha^{(i)},\beta^{(i)})\}_{i \in \mathbb N}$ is
monotonically decreasing, it converges~\cite[Theorem
4.5]{gorskiBiconvexSetsOptimization2007}. Then, since $L_k$ is continuous, $X$
and $Y$ are closed sets, and both~\eqref{eq:alphagivenbeta}
and~\eqref{eq:betagivenalpha} have unique solutions (due to strict convexity) in
the compact set $[0,1]$, by~\cite[Theorem
4.9]{gorskiBiconvexSetsOptimization2007}, we have $\lim_{i \to \infty} \|
(\alpha^{(i+1)}, \beta^{(i+1)}) - (\alpha^{(i)}, \beta^{(i)}) \| = 0$.
Furthermore, since $X$ and $Y$ are subsets of $\mathbb R$, the algorithm is
coordinate-wise and the sequence $\{(\alpha^{(i)},\beta^{(i)})\}_{i\in \mathbb
N}$ converges~\cite[pp. 398]{gorskiBiconvexSetsOptimization2007}. If the
accumulation point of the sequence $\{(\alpha^{(i)}, \beta^{(i)})\}_{i\in\mathbb
N}$ is in the interior of $[0,1]\times [0,1]$, then it is a stationary
point~\cite[Corollary 4.10]{gorskiBiconvexSetsOptimization2007}. However, in
theory, the stationary point can be a global minimum, local minimum, or a saddle
point. Therefore, it is important to select the starting point for the
iterations carefully.

\section{Streamlined analytical estimator}\label{app:streamlined}
Let $\T \in \{\S, \S_k\}$, then the analytical estimator is
\begin{align*}
    \tilde \M_k(\alpha,\beta) 
    &= \alpha \beta (\S_k - \S) + \alpha (\S - \I_\T) + \I_\T.
\end{align*}
The squared error is
\begin{align*}
    &\fn{\alpha \beta (\S_k - \S) + \alpha (\S - \I_\T) + \I_\T - \M_k}^2
    \\&=
    \alpha^2 \beta^2 \tilde B_{22} + \alpha^2 \beta \tilde B_{21} 
    + \alpha^2 \tilde B_{20} + \alpha \beta \tilde B_{11} 
    + \alpha \tilde B_{10} + \tilde B_{00},
\end{align*}
where
\begin{equation*}
    \begin{array}{ll}
	\tilde B_{22} = \fn{\S_k - \S}^2
	&
	\tilde B_{21} = 2 \ip{\S_k - \S}{\S - \I_\T}
	\\
	\tilde B_{20} = \fn{\S - \I_\T}^2 
	&
	\tilde B_{11} = 2\ip{\S_k - \S}{\I_\T - \M_k}
	\\
	\tilde B_{10} = 2\ip{\S - \I_\T}{\I_\T - \M_k}
	&
	\tilde B_{00} = \fn{\I_\T - \M_k}^2.
    \end{array}
\end{equation*}
By taking the expectation, we get the MSE, i.e., $B_{mn} = \E[\tilde B_{mn}]$,
for $m,n \in \{0,1,2\}$. The partial derivatives of the MSE are
\begin{align*}
    &\left\{
	\begin{array}{ll}
	    \partial_\alpha L_k &= 
	    2 \alpha \beta^2 B_{22}
	    + 2 \alpha \beta B_{21}
	    + 2 \alpha B_{20}
	    + \beta B_{11}
	    + B_{10}
	    \\
	    \partial_\beta L_k &=
	    2 \alpha^2 \beta B_{22}
	    + \alpha^2 B_{21}
	    + \alpha B_{11}.
	\end{array}
	\right.
\end{align*}
By setting $\partial_\beta L_k = 0$, it is easy to see that $(\alpha = 0,
\beta = -B_{10}/B_{11})$ is a critical point. Note, however, that when $\alpha
= 0$, the estimator does not depend on $\beta$. Then, if we assume $\alpha
\neq 0$, solving for $\beta$ yields
\begin{align*}
    \beta = -\frac{1}{2} \frac{\alpha B_{21} + B_{11}}{\alpha B_{22}}.
\end{align*}
Substituting this expression into $\partial_\alpha L_k =0$ and solving for
$\alpha$ yields
\begin{equation}
    \alpha = \frac{2 B_{10} B_{22} - B_{11}B_{21}}{B_{21}^2 - 4 B_{20}B_{22}}.
    \label{eq:alphasimple}
\end{equation}
Substituting this back to the equation for $\beta$ gives
\begin{equation}
    \beta = \frac{2B_{11} B_{20} - B_{10}B_{21}}{2B_{10}B_{22} - B_{11}B_{21}}.
    \label{eq:betasimple}
\end{equation}
The Hessian matrix is
\begin{equation*}
    \mat H = 
    \begin{pmatrix}
	L_{\alpha\alpha} & L_{\alpha\beta} \\
	L_{\beta\alpha} & L_{\beta\beta}
    \end{pmatrix},
\end{equation*}
where the second partial derivatives are
\begin{align*}
    L_{\alpha\alpha} &= \partial_\alpha^2 L_k
    =
    2 \beta^2 B_{22} + 2 \beta B_{21} + 2 B_{20}
    \\
    L_{\beta\beta} &= \partial_\beta^2 L_k
    =
    2 \alpha^2 B_{22} 
    \\
    L_{\beta\alpha} &= L_{\alpha\beta} = \partial_\alpha \partial_\beta 
    L_k
    =
    4 \alpha \beta B_{22} + 2\alpha B_{21} + B_{11}.
\end{align*}
The determinant of the Hessian matrix is
\begin{align*}
    \det(\mat H) =
    L_{\alpha\alpha} L_{\beta\beta} - L_{\alpha\beta}^2.
\end{align*}
For the critical point $(\alpha=0,\beta=-B_{10}/B_{11})$, we have
$L_{\beta\beta}=0$ and $L_{\alpha\beta}=B_{11}$. The determinant of the
Hessian is then $-B_{11}^2 < 0$. This implies that the eigenvalues of the
Hessian matrix have different signs, and hence, the corresponding critical
point is a saddle point. Regarding the other critical point defined
by~\eqref{eq:alphasimple} and~\eqref{eq:betasimple}, after some manipulations,
one has that the determinant of the Hessian is
\begin{align*}
    \frac{(2B_{10}B_{22} - B_{11}B_{21})^2}{4B_{20}B_{22} - B_{21}^2},
\end{align*}
which is positive since by applying the Cauchy-Schwartz inequality
\begin{align*}
    0 < B_{21}^2 &= (2 \E[\ip{\S_k - \S}{\S - \I_\T}])^2
    \\&
    < 4 \E[\fn{\S_k - \S}^2] \E[\fn{\S - \I_\T}^2]
    = 4 B_{22} B_{20},
\end{align*}
which holds with strict inequalities a.s. for any continuous distribution.
Furthermore, as also $L_{\beta\beta} > 0$ a.s. (and hence $L_{\alpha\alpha} >
0$ a.s.), the second critical point is a local minimum.

\section*{Acknowledgment}
The authors would like to thank Dr. Tuomas Aittom\"aki for discussions that helped in
improving this paper.

\ifCLASSOPTIONcaptionsoff
\newpage
\fi

\renewcommand*{\bibfont}{\small}
\printbibliography

@article{Abramovich2015,
  title = {On the Expected Likelihood Approach for Assessment of Regularization Covariance Matrix},
  author = {Abramovich, Y. I. and Besson, O.},
  date = {2015},
  journaltitle = {IEEE Signal Processing Letters},
  volume = {22},
  number = {6},
  pages = {777--781}
}

@article{bentler1986greatest,
  title = {Greatest Lower Bound to the Elliptical Theory Kurtosis Parameter},
  author = {Bentler, P. M. and Berkane, Maia},
  date = {1986},
  journaltitle = {Biometrika},
  volume = {73},
  number = {1},
  pages = {240--241}
}

@article{Besson2013,
  title = {Regularized Covariance Matrix Estimation in Complex Elliptically Symmetric Distributions Using the Expected Likelihood {{Approach}}\textemdash{{Part}} 2: The under-Sampled Case},
  author = {Besson, O. and Abramovich, Y. I.},
  date = {2013},
  journaltitle = {IEEE Transactions on Signal Processing},
  volume = {61},
  number = {23},
  pages = {5819--5829}
}

@article{Besson2020,
  title = {Maximum Likelihood Covariance Matrix Estimation from Two Possibly Mismatched Data Sets},
  author = {Besson, Olivier},
  date = {2020},
  journaltitle = {Signal Processing},
  volume = {167},
  pages = {Article 107285}
}

@book{bilodeau1999theory,
  title = {Theory of Multivariate Statistics},
  author = {Bilodeau, Martin and Brenner, David},
  date = {1999},
  publisher = {{New York: Springer-Verlag}}
}

@article{Brown1983,
  title = {Statistical Uses of the Spatial Median},
  author = {Brown, B. M.},
  date = {1983},
  journaltitle = {Journal of the Royal Statistical Society. Series B (Methodological)},
  volume = {45},
  number = {1},
  pages = {25--30},
  publisher = {{[Royal Statistical Society, Wiley]}}
}

@article{chen_shrinkage_2010,
  title = {Shrinkage Algorithms for {{MMSE}} Covariance Estimation},
  author = {Chen, Yilun and Wiesel, Ami and Eldar, Yonina C. and Hero, Alfred O.},
  date = {2010},
  journaltitle = {IEEE Transactions on Signal Processing},
  volume = {58},
  number = {10},
  pages = {5016--5029}
}

@article{du_fully_2010,
  title = {Fully {{Automatic Computation}} of {{Diagonal Loading Levels}} for {{Robust Adaptive Beamforming}}},
  author = {Du, L. and Li, J. and Stoica, P.},
  date = {2010},
  journaltitle = {IEEE Transactions on Aerospace and Electronic Systems},
  volume = {46},
  number = {1},
  pages = {449--458}
}

@article{elkhalilLargeDimensionalStudy2020,
  title = {A {{Large Dimensional Study}} of {{Regularized Discriminant Analysis}}},
  author = {Elkhalil, Khalil and Kammoun, Abla and Couillet, Romain and Al-Naffouri, Tareq Y. and Alouini, Mohamed-Slim},
  date = {2020},
  journaltitle = {IEEE Transactions on Signal Processing},
  volume = {68},
  pages = {2464--2479},
  eventtitle = {{{IEEE Transactions}} on {{Signal Processing}}}
}

@book{fang2018symmetric,
  title = {Symmetric Multivariate and Related Distributions},
  author = {Fang, Kai Wang},
  date = {2018},
  publisher = {{New York: CRC Press}}
}

@Thesis{frahm2004generalized,
  author      = {Frahm, G.},
  date        = {2004},
  institution = {Faculty of Management, Economics and Social Sciences, University of Cologne},
  title       = {Generalized Elliptical Distributions: Theory and Applications},
  type        = {phdthesis},
}

@article{Friedman1989,
  title = {Regularized Discriminant Analysis},
  author = {Friedman, Jerome H.},
  date = {1989},
  journaltitle = {Journal of the American Statistical Association},
  volume = {84},
  number = {405},
  pages = {165--175},
  publisher = {{Taylor \& Francis}}
}

@article{gorskiBiconvexSetsOptimization2007,
  title = {Biconvex Sets and Optimization with Biconvex Functions: A Survey and Extensions},
  shorttitle = {Biconvex Sets and Optimization with Biconvex Functions},
  author = {Gorski, Jochen and Pfeuffer, Frank and Klamroth, Kathrin},
  date = {2007-12-01},
  journaltitle = {Mathematical Methods of Operations Research},
  shortjournal = {Math Meth Oper Res},
  volume = {66},
  number = {3},
  pages = {373--407}
}

@article{Greene1989,
  title = {Partially Pooled Covariance Matrix Estimation in Discriminant Analysis},
  author = {Greene, Tom and Rayens, William S.},
  date = {1989},
  journaltitle = {Communications in Statistics - Theory and Methods},
  volume = {18},
  number = {10},
  pages = {3679--3702}
}

@article{Halbe2013,
  title = {Regularized Mixture Density Estimation with an Analytical Setting of Shrinkage Intensities},
  author = {Halbe, Z. and Bortman, M. and Aladjem, M.},
  date = {2013},
  journaltitle = {IEEE Transactions on Neural Networks and Learning Systems},
  volume = {24},
  number = {3},
  pages = {460--470}
}

@book{Hastie2009,
  title = {The Elements of Statistical Learning: Data Mining, Inference, and Prediction},
  author = {Hastie, Trevor and Tibshirani, Robert and Friedman, Jerome},
  date = {2009},
  edition = {Second Edition},
  publisher = {{New York: Springer}}
}

@article{Ikeda2016,
  title = {Comparison of Linear Shrinkage Estimators of a Large Covariance Matrix in Normal and Non-Normal Distributions},
  author = {Ikeda, Yuki and Kubokawa, Tatsuya and Srivastava, Muni S.},
  date = {2016},
  journaltitle = {Computational Statistics \& Data Analysis},
  volume = {95},
  pages = {95--108}
}

@InProceedings{klaRpackage,
  author    = {Weihs, Claus and Ligges, Uwe and Luebke, Karsten and Raabe, Nils},
  booktitle = {Data Analysis and Decision Support},
  date      = {2005},
  title     = {{{klaR}} Analyzing German Business Cycles},
  editor    = {Baier, D. and Decker, R. and Schmidt-Thieme, L.},
  location  = {{Berlin, Heidelberg}},
  pages     = {335--343},
  publisher = {{Springer-Verlag}},
}

@report{kuhnCaretClassificationRegression2020,
  type = {R package version 6.0-86},
  title = {Caret: Classification and Regression Training},
  author = {Kuhn, Max},
  date = {2020},
  url = {https://CRAN.R-project.org/package=caret},
  annotation = {R package version 6.0-86}
}

@article{Lancewicki2014,
  title = {Multi-Target Shrinkage Estimation for Covariance Matrices},
  author = {Lancewicki, T. and Aladjem, M.},
  date = {2014},
  journaltitle = {IEEE Transactions on Signal Processing},
  volume = {62},
  number = {24},
  pages = {6380--6390}
}

@article{ledoit_analytical_2020,
  title = {Analytical Nonlinear Shrinkage of Large-Dimensional Covariance Matrices},
  author = {Ledoit, Olivier and Wolf, Michael},
  date = {2020},
  journaltitle = {Annals of Statistics},
  volume = {48},
  number = {5},
  pages = {3043--3065},
  publisher = {{The Institute of Mathematical Statistics}}
}

@article{ledoit_honey,
  title = {Honey, {{I}} Shrunk the Sample Covariance Matrix},
  author = {Ledoit, Olivier and Wolf, Michael},
  date = {2004},
  journaltitle = {The Journal of Portfolio Management},
  volume = {30},
  number = {4},
  pages = {110--119},
  publisher = {{Institutional Investor Journals Umbrella}}
}

@article{ledoit_improved_2003,
  title = {Improved Estimation of the Covariance Matrix of Stock Returns with an Application to Portfolio Selection},
  author = {Ledoit, Olivier and Wolf, Michael},
  date = {2003},
  journaltitle = {Journal of Empirical Finance},
  volume = {10},
  number = {5},
  pages = {603--621}
}

@article{ledoit_spectrum_2015,
  title = {Spectrum Estimation: A Unified Framework for Covariance Matrix Estimation and {{PCA}} in Large Dimensions},
  shorttitle = {Spectrum Estimation},
  author = {Ledoit, Olivier and Wolf, Michael},
  date = {2015},
  journaltitle = {Journal of Multivariate Analysis},
  volume = {139},
  pages = {360--384}
}

@article{ledoit_well_conditioned_2004,
  title = {A Well-Conditioned Estimator for Large-Dimensional Covariance Matrices},
  author = {Ledoit, Olivier and Wolf, Michael},
  date = {2004},
  journaltitle = {Journal of Multivariate Analysis},
  volume = {88},
  number = {2},
  pages = {365--411}
}

@article{Li2017,
  title = {Estimation of Large Covariance Matrices by Shrinking to Structured Target in Normal and Non-Normal Distributions},
  author = {Li, J. and Zhou, J. and Zhang, B.},
  date = {2018},
  journaltitle = {IEEE Access},
  volume = {6},
  pages = {2158--2169}
}

@article{Maio2019,
  title = {Loading Factor Estimation under Affine Constraints on the Covariance Eigenvalues with Application to Radar Target Detection},
  author = {De Maio, A. and Pallotta, L. and Li, J. and Stoica, P.},
  date = {2019},
  journaltitle = {IEEE Transactions on Aerospace and Electronic Systems},
  volume = {55},
  number = {3},
  pages = {1269--1283}
}

@report{mlbenchpackage,
  type = {R package version 2.1-3},
  title = {Mlbench: Machine Learning Benchmark Problems},
  author = {Leisch, Friedrich and Dimitriadou, Evgenia},
  date = {2021},
  url = {https://CRAN.R-project.org/package=mlbench}
}

@article{Ollila2019,
  title = {Optimal Shrinkage Covariance Matrix Estimation under Random Sampling from Elliptical Distributions},
  author = {Ollila, Esa and Raninen, Elias},
  date = {2019},
  journaltitle = {IEEE Transactions on Signal Processing},
  volume = {67},
  number = {10},
  pages = {2707--2719}
}

@inproceedings{Raninen2018,
  title = {Optimal Pooling of Covariance Matrix Estimates across Multiple Classes},
  booktitle = {2018 {{IEEE}} International Conference on Acoustics, Speech, and Signal Processing ({{ICASSP}})},
  author = {Raninen, Elias and Ollila, Esa},
  date = {2018},
  pages = {4224--4228}
}

@online{raninen2020linear,
  title = {Linear Pooling of Sample Covariance Matrices},
  author = {Raninen, Elias and Tyler, David E. and Ollila, Esa},
  date = {2021},
  eprint = {2008.05854},
  eprinttype = {arxiv},
  primaryclass = {stat.ME},
  archiveprefix = {arXiv}
}

@article{Rayens1991,
  title = {Covariance Pooling and Stabilization for Classification},
  author = {Rayens, William and Greene, Tom},
  date = {1991},
  journaltitle = {Computational Statistics \& Data Analysis},
  volume = {11},
  number = {1},
  pages = {17--42}
}

@report{roeverKlaRClassificationVisualization2020,
  type = {R package version 0.6-15},
  title = {{{klaR}}: Classification and {{Visualization}}},
  author = {Roever, Christian and Raabe, Nils and Luebke, Karsten and Ligges, Uwe and Szepannek, Gero and Zentgraf, Marc and Meyer, David},
  date = {2020},
  url = {https://CRAN.R-project.org/package=klaR}
}

@article{schafer_shrinkage_2005,
  title = {A {{Shrinkage Approach}} to {{Large}}-{{Scale Covariance Matrix Estimation}} and {{Implications}} for {{Functional Genomics}}},
  author = {Sch\"afer, Juliane and Strimmer, Korbinian},
  date = {2005},
  journaltitle = {Statistical Applications in Genetics and Molecular Biology},
  volume = {4},
  number = {1}
}

@report{SpatialNPpackage,
  type = {R package version 1.1-4},
  title = {{{SpatialNP}}: Multivariate Nonparametric Methods Based on Spatial Signs and Ranks},
  author = {Sirki\"a, Seija and Miettinen, Jari and Nordhausen, Klaus and Oja, Hannu and Taskinen, Sara},
  date = {2019},
  url = {https://CRAN.R-project.org/package=SpatialNP}
}

@report{tictocpackage,
  type = {R package version 1.0},
  title = {Tictoc: Functions for Timing {{R}} Scripts, as Well as Implementations of {{Stack}} and {{List}} Structures.},
  author = {Izrailev, Sergei},
  date = {2014},
  url = {https://CRAN.R-project.org/package=tictoc}
}

@article{Tong2018,
  title = {Linear Shrinkage Estimation of Covariance Matrices Using Low-Complexity Cross-Validation},
  author = {Tong, Jun and Hu, Rui and Xi, Jiangtao and Xiao, Zhitao and Guo, Qinghua and Yu, Yanguang},
  date = {2018},
  journaltitle = {Signal Processing},
  volume = {148},
  pages = {223--233}
}

@online{UCIMLrepository,
  title = {{{UCI Repository}} of Machine Learning Databases},
  author = {Newman, D. J. and Hettich, S. and Blake, C. L. and Merz, C. J.},
  date = {1998},
  publisher = {{University of California, Irvine, Dept. of Information and Computer Sciences}},
  url = {https://archive.ics.uci.edu/ml/}
}

@inproceedings{yangRegularizedDiscriminantAnalysis2018,
  title = {Regularized {{Discriminant Analysis}}: A {{Large Dimensional Study}}},
  shorttitle = {Regularized {{Discriminant Analysis}}},
  booktitle = {2018 {{IEEE International Symposium}} on {{Information Theory}} ({{ISIT}})},
  author = {Yang, Xiaoke and Elkhalil, Khalil and Kammoun, Abla and Al-Naffouri, Tareq Y. and Alouini, Mohamed-Slim},
  date = {2018-06},
  pages = {536--540},
  eventtitle = {2018 {{IEEE International Symposium}} on {{Information Theory}} ({{ISIT}})}
}

@article{zhang2013,
  title = {Multivariate Generalized Gaussian Distribution: Convexity and Graphical Models},
  author = {Zhang, Teng and Wiesel, Ami and Greco, Maria Sabrina},
  date = {2013},
  journaltitle = {IEEE Transactions on Signal Processing},
  volume = {61},
  number = {16},
  pages = {4141--4148}
}

@inproceedings{zhang2016automatic,
  title = {Automatic Diagonal Loading for Tyler's Robust Covariance Estimator},
  booktitle = {{{IEEE}} Workshop on Statistical Signal Processing ({{SSP}})},
  author = {Zhang, Teng and Wiesel, Ami},
  date = {2016}
}

\end{document}